\documentclass[11pt,letterpaper]{article}

\usepackage[backend=biber,
style=alphabetic,
sorting=nyt,
minalphanames=3,
maxbibnames=99]{biblatex}

\addbibresource{biblio.bib}

\usepackage{amsmath,amsthm,thmtools,amsfonts,amssymb}
\usepackage{dsfont}
\usepackage{mathrsfs}
\usepackage{mathtools}
\usepackage{physics}
\usepackage{braket}
\usepackage{array}
\allowdisplaybreaks

\usepackage[margin=1in]{geometry}
\usepackage{xspace}
\usepackage{xcolor}
\usepackage{subcaption}
\usepackage{enumitem}
\usepackage{microtype}
\usepackage{csquotes}

\usepackage{algorithm}
\usepackage{algpseudocode}

\usepackage{hyperref}
\hypersetup{
    colorlinks=true,
    linkcolor=black,
    citecolor=black,
    filecolor=black,
    urlcolor=black,
}
\usepackage[capitalize]{cleveref}
\crefname{step}{step}{steps}
\Crefname{step}{Step}{Steps}

\usepackage[skins,many]{tcolorbox}

\newtheorem{theorem}{Theorem}[section]
\newtheorem{lemma}[theorem]{Lemma}
\newtheorem{corollary}[theorem]{Corollary}

\newtheorem{observation}[theorem]{Observation}

\theoremstyle{definition}
\newtheorem{definition}[theorem]{Definition}
\theoremstyle{remark}


\newcommand{\local}{LOCAL\xspace}



\newcommand{\pr}[1]{\Pr\left[#1\right]}
\newcommand{\expect}[1]{\mathbb{E}\left[#1\right]}
\newcommand{\expectView}[2]{\mathbb{E}_{v,#1,G}\left[#2\right]}

\newcommand{\variance}[1]{\operatorname{Var}\left[#1\right]}
\DeclareMathOperator{\dist}{dist} 
\newcommand{\neighborhood}{\NN}
\DeclareMathOperator{\poly}{poly} 

\newcommand{\view}{\VV}
\DeclareMathOperator{\diam}{diam}

\DeclareMathOperator{\girth}{girth}
\DeclareMathOperator{\indNum}{\alpha}

\newcommand{\nats}{\mathbb{N}}
\newcommand{\natsPos}{\nats_{+}}

\newcommand{\reals}{\mathbb{R}}

\renewcommand{\AA}{\mathcal{A}}

\newcommand{\EE}{\mathcal{E}}

\newcommand{\NN}{\mathcal{N}}

\newcommand{\VV}{\mathcal{V}}


\newcommand{\st}{\ \middle| \ }

\newcommand{\maxDeg}{\Delta}
\newcommand{\algo}{\AA}

\DeclareMathOperator{\bern}{Bern}

\DeclareMathOperator{\rademacher}{Rad}
\newcommand{\M}{\mathsf{M}}
\newcommand{\U}{\mathsf{U}}

\makeatletter
\DeclareMathOperator{\regularlog}{log}
\renewcommand{\log}{\protect\@ifstar{\regularlog^*}{\regularlog}}
\makeatother



\newtcolorbox{myframe}[2][]{%
	breakable,enhanced,colback=white,colframe=black,coltitle=black,
	sharp corners,boxrule=0.4pt,
	fonttitle=\itshape,
	attach boxed title to top left={yshift=-0.3\baselineskip-0.4pt,xshift=2mm},
	boxed title style={tile,size=minimal,left=0.5mm,right=0.5mm,
		colback=white,before upper=\strut},
	title=#2,#1
}




\begin{document}


\begin{flushleft}
    \huge\bf
    New Hardness Results for the LOCAL Model via a Simple Self-Reduction
\end{flushleft}
\smallskip

\newcommand{\myemail}[1]{\,$\cdot$\, {\small #1}}
\newcommand{\myaff}[1]{\,$\cdot$\, {\small #1}\par\medskip}

\newenvironment{myabstract}
{\list{}{\listparindent 1.5em
        \itemindent    \listparindent
        \leftmargin    0cm
        \rightmargin   0cm
        \parsep        0pt}%
    \item\relax}
{\endlist}

\newenvironment{mycover}
{\list{}{\listparindent 0pt
        \itemindent    \listparindent
        \leftmargin    0cm
        \rightmargin   1.5cm
        \parsep        0pt}%
    \raggedright
    \item\relax}
{\endlist}

\begin{mycover}

\textbf{Alkida Balliu}
\myaff{Gran Sasso Science Institute, Italy}

\textbf{Filippo Casagrande}
\myaff{Gran Sasso Science Institute, Italy}

\textbf{Francesco d'Amore}
\myaff{Gran Sasso Science Institute, Italy}

\textbf{Dennis Olivetti}
\myaff{Gran Sasso Science Institute, Italy}


\bigskip
\end{mycover}

\begin{myabstract}
\noindent\textbf{Abstract.}
    Very recently, Khoury and Schild [FOCS 2025] showed that any randomized LOCAL algorithm that solves maximal matching requires $\Omega(\min\{\log \Delta, \log_\Delta n\})$ rounds, where $n$ is the number of nodes in the graph and $\Delta$ is the maximum degree. This result is shown through a new technique, called \emph{round elimination via self-reduction}.
    The lower bound proof is beautiful and presents very nice ideas. However, it spans more than 25 pages of technical details, and hence it is hard to digest and generalize to other problems.

    Historically, the simplification of proofs and techniques has marked an important turning point in our understanding of the complexity of graph problems. Our paper makes a step forward in this direction, and provides the following contributions.

    \begin{enumerate}
        \item We present a short and simplified version of the round elimination via self-reduction technique. The simplification of this technique enables us to obtain the following two hardness results.
        \item We show that any randomized LOCAL algorithm that solves the maximal $b$-matching problem requires $\Omega(\min\{\log_{1+b}\Delta, \log_\Delta n\})$ and $\Omega(\sqrt{\log_{1+b} n})$ rounds. We recall that the $b$-matching problem is a generalization of the matching problem where each vertex can have up to $b$ incident edges in the matching. As a corollary, for $b=1$, we obtain a short proof for the maximal matching lower bound shown by Khoury and Schild.
        \item We show that any randomized LOCAL algorithm that properly colors the edges of a graph with $\Delta + k$ colors requires $\Omega(\min\{\log \Delta, \log_\Delta n\})$ and $\Omega(\sqrt{\log n})$ rounds, for any $k\le \Delta^{1-\varepsilon}$ and any constant $\varepsilon > 0$.
    \end{enumerate}
\end{myabstract}

\thispagestyle{empty}
\setcounter{page}{0}
\newpage

\section{Introduction}

In this work, we study the hardness of fundamental graph problems in the well-studied \local model of distributed computing. In this model, each node of the input graph represents a computer, and edges represent communication links. Nodes, which communicate through message passing, have unbounded local computational power and the size of the messages that they exchange can be arbitrarily large. 
Since this model is very strong, lower bounds that hold in the \local model are of particular interest: they directly extend to all of the many weaker models studied in the literature. However, proving lower bounds for the \local model is challenging. 

\subsection{Lower bound techniques in the \local model}
The known non-trivial lower bounds for the \local model can be essentially divided in two main groups: those that are based on the \emph{round elimination technique}, and those that explicitly design some hard instances where it is possible to apply some \emph{indistinguishability arguments}.

\paragraph{Round elimination.}
At a high level, lower bounds shown via the round elimination technique consist of constructing a sequence of problems that satisfy some desired properties. More precisely, let $P_0$ be a problem of interest for which we want to show a lower bound. The goal (and the challenge) is to produce a sequence $P_0, P_1, \ldots, P_T$ of problems such that the following two properties hold: (i) for all $0<i\le T$ it holds that $P_i$ is one round easier than $P_{i-1}$; (ii) $P_T$ cannot be solved in $0$ rounds. Such a sequence implies that $P_0$ cannot be solved in $T$ rounds and hence we obtain a lower bound for $P_0$.
The first \local lower bounds based on this technique date back to the late '80s and the beginning of the '90s, and they show that $3$-coloring cycles requires $\Omega(\log^* n)$ rounds (where $n$ is the number of nodes in the graph), both for deterministic \cite{linial-1992-locality-in-distributed-graph-algorithms} and randomized \cite{naor-1991-a-lower-bound-on-probabilistic-algorithms-for} algorithms. 
Even though at a first glance such a technique may look conceptually easy, for many years researchers found it very difficult to use it for showing lower bounds for different graph problems in the \local model.
In fact, the first round elimination lower bound after the aforementioned ones was shown more than 25 years later,
in a paper that presents a lower bound for the sinkless orientation problem,\footnote{A solution to sinkless orientation requires orienting edges such that no node is a sink.} the proof of which is very long and involved \cite{brandt-fischer-etal-2016-a-lower-bound-for-the}.

\paragraph{Indistinguishability.}
Lower bounds shown via indistinguishability arguments consist of constructing a class of graphs that have some desired properties.
At a very high level, this technique consists of showing that there are nodes that, within their running time, cannot distinguish between two different settings in which they would be required to give different outputs, reaching a contradiction.
Perhaps one of the most influential indistinguishability-type lower bounds is the so-called KMW lower bound, shown in 2004, which provides a lower bound for many fundamental graph problems \cite{kuhn-moscibroda-wattenhofer-2004-what-cannot-be}. The KMW construction is very nice and clever. However, the proof is quite complex. 
For more than 15 years, the KMW construction was one of the few (non-trivial) indistinguishability-type lower bounds, and in fact, until recently, there were no new constructions inspired by the KMW one. Arguably, the involved proof of the KMW lower bound may have been an obstacle.

\subsection{Simplifying proofs and techniques marked a turning point}
The deeper understanding and simplification of the existing lower bound techniques and proofs marked an essential turning point in our understanding of the hardness of several graph problems in the \local model.

\paragraph{Automatic round elimination.}
Regarding lower bounds proved via the round elimination technique, the turning point happened in 2019, when the \emph{automatic round elimination} technique was introduced \cite{brandt-2019-an-automatic-speedup-theorem-for}. Before this work, we had to \emph{guess} what could be the sequence $P_0, P_1, \ldots, P_T$ of problems that have the aforementioned desired properties and that hence implies a lower bound for $P_0$. The automatic round elimination technique greatly simplifies this part, as it is able to \emph{automatically construct such a sequence}. 
While it does not make it trivial to prove all the lower bounds that we care about (e.g., we still do not have lower bounds, as a function of $\Delta$, for $(\Delta+1)$-vertex coloring), this technique has been used to prove many interesting lower bounds for fundamental graph problems (see, e.g., \cite{balliu-brandt-etal-2021-lower-bounds-for-maximal,brandt-olivetti-2020-truly-tight-in-delta-bounds-for,balliu-brandt-olivetti-2022-distributed-lower-bounds,balliu-brandt-etal-2021-improved-distributed-lower,balliu-brandt-etal-2022-distributed-delta-coloring,balliu-brandt-etal-2020-classification-of-distributed,balliu-brandt-etal-2023-distributed-maximal-matching,balliu-brandt-etal-2025-distributed-quantum-advantage}).
For example, the automatic version of round elimination made it possible to solve a long-standing open question regarding 
the hardness of the maximal matching problem:\footnote{A matching in a graph $G$ is a selection of node-disjoint edges, and a matching in $G$ is maximal if it is not a proper subset of any other matching in $G$.} 
indeed, \cite{balliu-brandt-etal-2019-lower-bounds-for-maximal} showed that any randomized \local algorithm that solves maximal matching requires $\Omega(\min\{\Delta, \log_\Delta\log n\})$ rounds.
Moreover, this technique heavily simplified existing proofs: using automatic round elimination, the hardness of sinkless orientation follows from a two-line proof.

\paragraph{A breezing proof for KMW.}
The KMW construction uses indistinguishability-type arguments to show lower bounds for several fundamental graph problems. For example, it shows that any randomized \local algorithm that solves maximal matching requires $\Omega(\min\{\log \Delta/\log\log\Delta, \log_\Delta n\})$ rounds, even on trees. 
Since a maximal matching on some graph $G$ is a maximal independent set on the line graph of $G$,\footnote{An independent set in a graph $G$ is a selection of pairwise non-adjacent nodes, and an independent set in $G$ is maximal if it is not a proper subset of any other independent set in $G$. The line graph $G'$ of a graph $G$ is constructed as follows: for any edge of $G$ we have a node in $G'$ and there is an edge between two nodes in $G'$ if and only if the edges in $G$ that they represent share a node.} 
the KMW lower bound directly extends to maximal independent set as well, 
but in this specific case the lower bound does not hold on trees.
Perhaps the turning point that helped to better understand the KMW lower bound happened in 2020, when \cite{coupette-lenzen-2021-a-breezing-proof-of-the-kmw-bound} presented a breezing proof of the KMW lower bound. This paper made the KMW construction more accessible, and it helped in proving new lower bounds inspired by KMW. Indeed, shortly after, \cite{balliu-ghaffari-etal-2022-node-and-edge-averaged} came up with a novel adaptation of KMW, showing, among other things, a lower bound for maximal independent set on trees, resolving the complexity of maximal independent set on trees up to an $O(\sqrt{\log\log n})$ factor and closing an open question stated in \cite{barenboim-elkin-2013-distributed-graph-coloring}.

\subsection{Round elimination via self-reduction.}
Very recently, \cite{khoury2025} showed that any \local randomized algorithm that solves maximal matching requires $\Omega(\min \{\log \Delta, \log_\Delta n\})$ rounds in the \local model, improving the KMW lower bound (see \cref{table:matching-lb} for a summary on the randomized lower bounds for maximal matching). 
This result was obtained through a novel application of the round elimination technique, called \emph{round elimination via self-reduction}.

While the automatic round elimination technique of \cite{brandt-2019-an-automatic-speedup-theorem-for} works for so-called \emph{locally checkable problems}, this new technique works by constructing a sequence of \emph{optimization problems}. 
In more detail, the authors of \cite{khoury2025} constructed a sequence 
$P_0,\ldots,P_T$, where each problem $P_i$ is the problem of approximating a maximum matching, but where the approximation factor decreases at each step.
In other words, the authors show that, if we can solve in $T$ rounds a matching where the unmatched nodes are not too many, then in one round less we can produce a matching that still has a large-enough size. 
By repeating this reasoning iteratively, the authors get their lower bound result. 
This is a novel and very nice idea. However, unfortunately, this new lower bound is involved and its proof for maximal matching spans more than 25 pages. This makes the round elimination via self-reduction technique hard to digest and generalize to other graph problems.

\begin{table}[h]
    \begin{tabular}{lllll}
    & Technique &Randomized lower bound  &Solely as a function of $n$ & Paper\\
    \hline
    & KMW & $\Omega(\min\{\log \Delta/\log\log\Delta, \log_\Delta n\})$  & $\Omega(\sqrt{\log n/ \log\log n})$ & \cite{kuhn-moscibroda-wattenhofer-2004-what-cannot-be}\\
    & Automatic RE &$\Omega(\min\{\Delta, \log_\Delta\log n\})$  & $\Omega(\log\log n / \log \log \log n)$ & \cite{balliu-brandt-etal-2019-lower-bounds-for-maximal}\\
    & RE via self-reduction &$\Omega(\min\{\log \Delta, \log_\Delta n\})$  &  $\Omega(\sqrt{\log n})$ & \cite{khoury2025}
    \end{tabular}
    \caption{The table summarizes the lower bounds for maximal matching obtained over the years. These lower bounds depend both on the maximum degree $\Delta$ and on the number $n$ of nodes of the graph. The lower bound of \cite{balliu-brandt-etal-2019-lower-bounds-for-maximal} is not comparable with the other two lower bounds: compared to the other two, this lower bound is stronger as a function of $\Delta$ and weaker as a function of $n$. The work of \cite{khoury2025} strictly improves the KMW lower bound.}
    \label{table:matching-lb}
\end{table}

\subsection{Our Contributions}
The contribution of this paper is twofold: we introduce a greatly simplified round elimination via self-reduction technique, and we apply it to show lower bounds for some fundamental graph problems. 

\paragraph{A simplified round elimination via self-reduction technique.}
We present a simple and clean round elimination via self-reduction technique. 
As a consequence, we provide a much shorter and simpler proof for obtaining the same lower bound as \cite{khoury2025} for maximal matching. 
Our round elimination via self-reduction technique is different from that of \cite{khoury2025}, and it avoids many new concepts that are used and needed in \cite{khoury2025}, resulting in an easier-to-understand technique. We believe that this will mark a turning point for understanding the hardness of other graph problems in the \local model using the round elimination via self-reduction technique.
On a high level, the authors of \cite{khoury2025} proved a lower bound for maximal matching by considering the sequence of problems $P_i$ of finding a matching where at most a fraction $p_i$ of nodes are unmatched, in a model where the computational entities are the \emph{edges} of the graph.
We show that, by considering a different problem, which we call $1$-grabbing, and the more standard setting in which algorithms run on \emph{nodes}, the proof becomes much simpler. This problem requires each node to select exactly one of its incident edges, and the goal is to minimize the edges grabbed by only one endpoint.
In the setting considered by \cite{khoury2025}, the definition of the $T-1$ round algorithm requires the concept of \emph{$\delta$-good flowers}, which complicates the definition of the algorithm and the proofs. We show that, by considering the $1$-grabbing problem, and by considering \emph{node} algorithms, we can avoid this concept entirely.

\paragraph{A lower bound for maximal $b$-matching.}
We do not explicitly provide a proof for $1$-grabbing. Instead, we consider a more general problem called $b$-grabbing. This allows us to prove a lower bound for a well-studied generalization of the maximal matching problem, that is called maximal $b$-matching.
In this problem, the goal is to select a maximal set of edges such that each node has at most $b$ incident selected edges. 
We say that a $b$-matching is maximal if it is not a proper subset of any other $b$-matching (see \cref{def:prelim:b-matching} for a formal definition). The standard maximal matching problem is obtained by the special case where $b=1$. 
We show that any \local randomized algorithm that solves maximal $b$-matching requires $\Omega(\min\{\log_{1+b} \Delta, \log_\Delta n\})$ and $\Omega(\sqrt{\log_{1+b} n})$ rounds, even on trees, and even when nodes have access to shared randomness (see \Cref{theorem:analysis:lower-bound-relaxed-b-matching}). 

\paragraph{A lower bound for $(\Delta + k)$-edge coloring.}
Finally, we show that our proofs easily extend to a lower bound for the $(\Delta + k)$-edge coloring problem, showing that, for any $k\le \Delta^{1-\varepsilon}$ and constant $\varepsilon > 0$, any randomized \local algorithm that solves $(\Delta + k)$-edge coloring requires 
$\Omega(\min\{\log \Delta, \log_\Delta n\})$ and $\Omega(\sqrt{\log n})$ rounds, even on trees, and even when nodes have access to shared randomness (see \Cref{theorem:analysis:edge-coloring-lower-bound}). 
This is a generalization of the well-known $(\Delta + 1)$-edge coloring problem (Vizing's theorem). 
For this problem, prior to our result, the best lower bound was inherited from the $(2\Delta - 2)$-edge coloring problem studied in \cite{chang-he-etal-2018-the-complexity-of-distributed-edge}, which requires $\Omega(\log_\Delta \log n)$ rounds w.h.p. and $\Omega(\log_\Delta n)$ rounds deterministically.
Among the known upper bounds for the edge-coloring problem that we consider \cite{christiansen2023,Bernshteyn2022,BERNSHTEYN202569}, to the best of our knowledge, the best upper bound for $(\Delta + k)$-edge coloring, for large values of \(\Delta\), comes from the $O(\poly(\Delta) \poly(\log n)) $-round algorithm for \((\Delta+1)\)-edge coloring obtained in \cite{BERNSHTEYN202569}.

\section{Preliminaries}

In this section, we introduce some notation, some definitions, and the model of computing that we consider.

\paragraph{Graphs.}
We work with simple undirected graphs $G = (V, E)$, where $V$ is the set of vertices and $E$ is the set of edges.
If \(V\) and \(E\) are not specified, we denote by \(V(G)\) the set of nodes of \(G\) and by \(E(G)\) the set of edges of \(G\). 
We say that two nodes \(u,v\) are \emph{adjacent} if there is an edge \(\{u,v\} \in E\) between them. 
The degree of a node \(v\), denoted by \(\deg_G(v)\), is the number of nodes that are adjacent to \(v\) in \(G\).
The maximum degree of a graph \(G\), denoted by \(\maxDeg\), is the maximum degree over all nodes in \(G\).

Given a graph \(G = (V,E)\) and any two nodes \(u,v \in V\), the distance between \(u\) and \(v\), denoted by \(\dist_G(u,v)\), is the length of a shortest path between \(u\) and \(v\) in \(G\).
We omit the subscript \(G\) when it is clear from the context.
The diameter of \(G\) is denoted by \(\diam(G)\) and is defined as the maximum distance between any two nodes in \(G\).
For any graph \(G = (V,E)\) and any node \(v \in V\), we denote by \(\NN_r[v]\) the radius-\(r\) closed neighborhood of \(v\), i.e., the set of vertices that are at distance at most \(r\) from \(v\) in \(G\).
We can easily extend this notation to subsets of nodes \(S \subseteq V\) by defining \(\NN_r[S] = \bigcup_{v \in S} \NN_r[v]\).

For any subset of nodes \(S \subseteq V\), we denote by \(G[S]\) the subgraph of \(G\) induced by \(S\), i.e., the graph whose vertex set is \(S\) and whose edge set contains all edges in \(E\) that have both endpoints in \(S\).
The girth of a graph \(G\), denoted by \(\girth(G)\), is the length of a shortest cycle in \(G\).
If \(G\) is acyclic, then we say that \(G\) has infinite girth.

\paragraph{Half-edges and labelings.}
Given any node \(v \in V\) and any edge \(e \in E\) that is incident to \(v\), the node-edge pair \((v,e)\) is called \emph{half-edge}. 
We also use the short notation \(ve = (v,e)\).
We say that the half-edge \((v,e)\) is \emph{incident} to \(v\).
Note that any edge \(e = \{u,v\}\) is uniquely split into two half-edges \((u,\{u,v\})\) and \((v,\{u,v\})\).
Let \(H(G)\) be the set of all half-edges of $G$.
An \emph{edge-labeled} graph \(G\) is a pair \((G = (V,E), \ell)\) where \(\ell\colon H(G) \to \Sigma\) is a function that assigns a label from a set of labels \(\Sigma\) to all half-edges of \(G\): \(\ell\) is the \emph{labeling} of \(G\).

\paragraph{Independence number.}
An independent set of a graph \(G\) is a subset of nodes \(I \subseteq V\) such that no two nodes in \(I\) are adjacent.
The independence number of \(G\), denoted by \(\indNum(G)\), is the size of a maximum independent set in \(G\).

\paragraph{Isomorphic graphs.}
Given two graphs \(G\) and \(H\), we say that \(G\) is \emph{topologically isomorphic}\footnote{We specify the term ``topologically'' because we will use a more restrictive notion of isomorphism later in \cref{preliminar:isomorphism}.}
to \(H\) if there exists a bijection \(\varphi: V(G) \to V(H)\) such that \(\{u,v\} \in E(G)\) if and only if \(\{\varphi(u), \varphi(v)\} \in E(H)\).
Such \(\varphi\) is called a topological isomorphism between \(G\) and \(H\).

\paragraph{Probability notions.}
For an event \(\EE\), we write \(\EE^C\) to denote the complementary event of \(\EE\), and we write \(\mathds{1}_\EE\) to denote the indicator random variable of \(\EE\), i.e., \(\mathds{1}_\EE = 1\) if \(\EE\) holds, and \(\mathds{1}_\EE = 0\) otherwise.

\paragraph{The \local model.}

We work in the \local model of distributed computing~\cite{linial-1992-locality-in-distributed-graph-algorithms}.
In this model, we are given a distributed system of \(n\) processors/nodes that can communicate over a communication network.
The network is represented as a graph \(G = (V,E)\), where each node in \(V\) corresponds to a processor and each edge in \(E\) corresponds to a communication link between two processors.
At the beginning of the computation, all nodes are identical, except for a unique identifier that is assigned to each node from the set \(\{1, \ldots, n^c\}\) for some fixed constant \(c \geq 1\), and for a local variable that stores input data.
Input data consists of the input given by the considered graph problem, the degree of the node, and port numbers that allow the node to distinguish its incident communication links (note that port numbers are assigned arbitrarily by an adversary).
We also assume that nodes know the number \(n\) of processors in the graph.
The computation proceeds in synchronous rounds and every node runs the same algorithm:
in each round, a node can perform arbitrary local computation and send a message of arbitrarily large size to each of its neighbors. 
Computation ends when all nodes have terminated running the algorithm and output some local output.
In the setting that we consider, the local output of a node is a label for each incident half-edge, which can be stored locally using port numbers. In other words, local outputs are announced on half-edges.
The running time of an algorithm is the number of communication rounds until computation ends for all nodes.

In the \emph{randomized} \local model, nodes are given access to independent sources of randomness (each node has its own independent infinite stream of random bits).
In the \local model \emph{with shared randomness}, nodes also have access to a shared source of randomness, which is an infinite stream of random bits that is the same for all the nodes and is independent of the private randomness of each node mentioned before.
In the \emph{deterministic} \local model, nodes do not have access to any source of randomness.

Randomized algorithms are allowed to \emph{fail} (i.e., to not produce a correct solution for the given problem) with probability at most $n^{-c}$ for some constant $c \ge 1$. We say that an algorithm is \emph{no-error} if it always produces a valid solution for the given problem.

The \emph{complexity} of a graph problem is the minimum asymptotic running time of an algorithm that solves the problem.
Note that, since nodes do know \(n\), any randomized algorithm can produce a unique identifier for all nodes in the set \([n^c]\), for some constant \(c > 1\), with probability at least \(1 - n^{-(c-1)}\) by letting each node pick u.a.r.\ an identifier from the set \([n^c]\).
We will assume that identifiers are assigned u.a.r.\ when considering randomized algorithms. 
We will also assume that nodes recompute their port numbers locally with a random permutation.
In this work, we will consider the randomized \local model with shared randomness, which is the strongest model, and prove a lower bound in this model: this implies a lower bound in all weaker models.

\paragraph{View of a node.}
With the definition of the model at hand, we can now define the notion of the radius-\(T\) view of a node \(v\).
\begin{definition}[View of a node]\label{def:prelim:view}
    Suppose we are given a distributed system whose network graph is \(G = (V,E)\).
    The radius-\(T\) view of a node \(v \in V\) in \(G\), denoted by \(\view_T(v,G)\), is a distributed system whose network is represented by the graph \(H\) with \(V(H) = \neighborhood_T[v]\) and such that \(\{u,w\} \in E(H)\) if and only if the following holds: \(\dist_G(u,v) \le T\) and \(\dist_G(u,w) \le T-1\), or vice-versa.
    Each node in \(\view_T(v,G)\) has the same input as in the input distributed system (hence, identifier, port numbers, and -- possibly -- randomness).
\end{definition}

\paragraph{Equivalence between graphs and distributed systems.}\label{preliminar:isomorphism}
From now on, for simplicity, when referring to an input graph \(G = (V,E)\) for a problem in the \local model, we consider \(G\) to be the whole distributed system, so that nodes of \(G\) store identifiers, port numbers, and possibly input data and randomness.
Given two input graphs \(G\) and \(H\), we say that \(G\) is isomorphic to \(H\) if \(G\) and \(H\) are topologically isomorphic, and the topological isomorphism \(\varphi\) preserves identifiers, port numbers, and input data (and randomness, if any).
In such a case, we say that \(\varphi\) is an isomorphism between the input graphs \(G\) and \(H\).
Such a definition also extends to views of nodes \(\view_T(v,G)\) and \(\view_T(w,H)\): in this case we also ask that the isomorphism brings \(v\) to \(w\).

\paragraph{Maximal \(b\)-matching.}
In this paper, we consider the problem of finding a maximal \(b\)-matching in a graph, for any \(b \in \natsPos\) (\(b\) can also be a function of the input graph parameters, such as the maximum degree \(\maxDeg\)).

\begin{definition}[Maximal \(b\)-matching]
\label{def:prelim:b-matching}
Let \(G = (V, E)\) be a graph.
A \emph{\(b\)-matching} of \(G\) is a subset of edges \(M \subseteq E\) such that every node \(v \in V\) is adjacent to at most \(b\) edges in \(M\).
A \(b\)-matching \(M\) is \emph{maximal} if there is no edge \(e \in E \setminus M\) such that \(M \cup \{e\}\) is still a \(b\)-matching.
We say that a node \(v \in V\) is \emph{maximal} if no edge that is incident to \(v\) can be added to the \(b\)-matching without violating the \(b\)-matching property.
Note that by taking \(b = 1\) we obtain the classical definition of maximal matching.
We say that an edge is \emph{matched} if and only if it is in $M$.
\end{definition}

\paragraph{Regular graphs.}
A graph is $\Delta$-regular if all nodes have degree $\Delta$. A graph is regular if it is $\Delta$-regular for some $\Delta$.
We will prove a lower bound that holds already in the case of regular graphs. Hence, in the rest of the paper, we restrict ourselves to the case of $\Delta$-regular graphs.

\paragraph{Graphs with low independence number.}
We will use the fact that there exist regular graphs with high girth and low independence number. Such graph families have been shown to exist in \cite{bollobas80,bollobas88,brandt-chang-etal-2022-local-problems-on-trees-from-the}. We use a graph family that has already been considered in previous works (see, e.g., \cite{balliu-boudier-etal-2024-tight-lower-bounds-in-the}).

\begin{lemma}[Lemma 2.1 of \cite{balliu-boudier-etal-2024-tight-lower-bounds-in-the}]\label{lemma:prelim:lb-graph}
    There exist two positive constants \(\rho, \varepsilon\) such that the following holds.
    Let \(n,\maxDeg\) be two positive integers such that \(n \maxDeg\) is even, and such that \(2 \le \maxDeg \le n\).
    There exists a graph \(G_{n,\maxDeg}\) of \(n\) nodes with the following properties:
    \begin{itemize}
        \item (Regularity) \(G_{n,\maxDeg}\) is \(\maxDeg\)-regular.
        \item (Girth) \(G_{n,\maxDeg}\) has girth at least \(\varepsilon \log_\maxDeg (n)\).
        \item (Independence number) \(\indNum(G_{n,\maxDeg}) \le \rho n \log (\maxDeg) / \maxDeg\).
    \end{itemize}
\end{lemma}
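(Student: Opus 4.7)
The plan is to establish the lemma by the probabilistic method, following the classical Bollobás approach to random regular graphs, and then repair the resulting graph so that it is exactly \(\maxDeg\)-regular. I would work in the configuration model: start with \(n\) vertices, each equipped with \(\maxDeg\) half-edges, and take a uniformly random perfect matching on the \(n\maxDeg\) half-edges (using that \(n\maxDeg\) is even). This produces a random \(\maxDeg\)-regular multigraph \(G\) on \(n\) vertices, and the two quantitative bounds I need are both standard moment estimates for this model.

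First I would bound the independence number. For a fixed \(S \subseteq V\) of size \(s = \lceil \rho n \log(\maxDeg)/\maxDeg \rceil\), I would estimate the probability that \(S\) is independent by counting configurations in which none of the \(\maxDeg s\) half-edges incident to \(S\) is matched inside \(S\). A direct Stirling calculation gives a bound of order \(\exp(-\Omega(\maxDeg s^{2}/n))\). A union bound over the \(\binom{n}{s}\) choices of \(S\) then yields expected number of independent sets of size \(s\) at most \(\exp(s\log(en/s) - \Omega(\maxDeg s^{2}/n))\), which is \(o(1)\) for a small enough constant \(\rho\). So with high probability \(\indNum(G) \le s\). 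Second I would bound the number of short cycles: the expected number of cycles of length exactly \(k\) in the configuration model is at most \((\maxDeg-1)^{k}/(2k)\), so the total number of cycles of length at most \(g = \varepsilon \log_{\maxDeg}(n)\) has expectation \(O((\maxDeg-1)^{g}) = O(n^{\varepsilon})\). Markov's inequality then guarantees that, with probability bounded away from \(0\), there are at most \(n^{2\varepsilon}\) short cycles, and in particular at most \(o(n)\) vertices lie on a short cycle.

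The main obstacle is the repair step: combining the two bounds while keeping the graph \(\maxDeg\)-regular. Deleting one edge per short cycle is tempting but may increase the independence number, so instead I would delete the vertex set \(X\) of all vertices that lie on a short cycle; since \(|X| = o(n)\), the induced subgraph \(G[V \setminus X]\) has girth at least \(g\) and independence number at most \(s\), but its degrees can be slightly less than \(\maxDeg\). To restore exact regularity, I would apply the standard trick of working with \(n' = n(1 + o(1))\) vertices from the start so that, after vertex deletion, one can extract a \(\maxDeg\)-regular subgraph on exactly \(n\) vertices (using a defect-version of Hall's theorem, or a short augmenting-path argument on the small number of low-degree vertices). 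Because removing vertices can only decrease the independence number and finding a spanning \(\maxDeg\)-regular subgraph on a fixed vertex set cannot increase it either, the final graph \(G_{n,\maxDeg}\) retains all three desired properties. Alternatively, one can cite an explicit construction (e.g., LPS Ramanujan graphs) for the girth and regularity and invoke the spectral/expander estimate \(\indNum \le O(n\log \maxDeg/\maxDeg)\) to conclude; either route finishes the proof.
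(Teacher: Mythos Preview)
The paper does not give its own proof of this lemma; it quotes it as Lemma~2.1 of \cite{balliu-boudier-etal-2024-tight-lower-bounds-in-the} and attributes the underlying construction to \cite{bollobas80,bollobas88,brandt-chang-etal-2022-local-problems-on-trees-from-the}. So there is nothing in the present paper to compare your argument against --- your sketch is a reconstruction of the classical Bollob\'as probabilistic argument those citations point to, and the two moment estimates you state (independence number via union bound, short-cycle count via first moment) are the right ingredients.

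Your repair step, however, has two concrete problems. First, after deleting the vertex set \(X\), the surviving neighbours of \(X\) have degree strictly below \(\maxDeg\), so no spanning \(\maxDeg\)-regular subgraph on that vertex set exists at all; removing those low-degree vertices only propagates the defect, and neither Hall's theorem nor an augmenting-path argument manufactures the missing edges. Second, your monotonicity claim is inverted: passing to a spanning subgraph means \emph{deleting} edges, which can only \emph{raise} \(\indNum\), not preserve it. The usual fix runs differently: delete one \emph{edge} per short cycle (so \(\indNum\) grows by at most the number of deleted edges, namely \(O(n^{2\varepsilon}) \ll n\log\maxDeg/\maxDeg\)) and then \emph{add} new edges between far-apart defect vertices to restore exact regularity --- adding edges can only lower \(\indNum\), and a short greedy argument shows one can avoid creating new short cycles. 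Finally, your LPS alternative does not deliver the stated bound: the Hoffman/spectral estimate on a Ramanujan graph yields only \(\indNum = O(n/\sqrt{\maxDeg})\), not \(O(n\log\maxDeg/\maxDeg)\), so that route proves a strictly weaker statement (one that would, incidentally, still suffice for the applications in this paper, but not for the lemma as written).
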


\section{\texorpdfstring{Lower bound for maximal \(b\)-matching}{Lower bound for maximal b-matching}}\label{sec:matching-lb}

Throughout this section, all algorithms we consider run in the randomized \local model with shared randomness, unless otherwise specified.

\subsection{Overview of our proof}
In order to prove our lower bound, we first need to introduce a different problem, called $b$-grabbing.
\begin{definition}[$b$-grabbing]
    Let $b$ be a positive integer.
    In the $b$-grabbing problem, each node needs to output $\M$ on exactly $b$ incident half-edges, and $\U$ on all the other incident half-edges.
    Half-edges labeled $\M$ by $v$ are said to be \emph{grabbed} by $v$.
    If an edge $e= \{u,v\}$ satisfies that $ve$ is grabbed by $v$ and $ue$ is grabbed by $u$, then we say that $e$ is \emph{matched}.
\end{definition}

\begin{definition}[Saturated node]
    Consider any solution to the \(b\)-grabbing problem or to the maximal \(b\)-matching problem on some graph \(G = (V, E)\).
    A node \(v \in V\) is \emph{saturated} if there exist exactly \(b\) matched edges incident to $v$, otherwise it is \emph{unsaturated}.
\end{definition}

\begin{definition}[Quality and badness]
    The \emph{quality} $q$ of a solution for $b$-grabbing is the number of matched edges.
    In an $n$-node graph, the parameter $p = 1 - \frac{2q}{bn}$ is called \emph{badness}. Observe that  $q = b \frac{n}{2} (1 - p)$.
\end{definition}
\noindent Note that $q \in [0, {bn}/{2}]$, hence $p \in [0,1]$. Moreover, observe that the badness of a solution for a $b$-grabbing instance is $0$ if and only if every node is saturated, while it is $1$ if and only if no edge is matched. So, as the word suggests, we can interpret the badness as a measure of how bad a solution for $b$-grabbing is in terms of representing a maximum $b$-matching.

Let $\mathcal{G}$ be the family of graphs given by \Cref{lemma:prelim:lb-graph}.
On a high level, our lower bound for maximal $b$-matching is obtained as follows.
\begin{enumerate}
    \item We prove that, in $\mathcal{G}$, an algorithm for maximal $b$-matching implies an algorithm (with the same runtime) for $b$-grabbing with expected badness at most $p = \Theta(b \frac{\log \maxDeg}{\maxDeg})$ (see \Cref{ssec:relation}). 
    \item Then, we prove that an algorithm for $b$-grabbing with runtime $T$ and expected badness $p_i$ can be converted into an algorithm for $b$-grabbing with runtime $T-1$ and expected badness at most $p_{i+1} = c\cdot \sqrt{b} \cdot p_i$ for some universal constant $c$ (see \Cref{ssec:onestep}).
    \item We show that any $0$-round algorithm for $b$-grabbing must have expected badness at least some positive constant independent of $\maxDeg$ (see \Cref{ssec:zero}).
    \item We iterate point 2 recursively, as long as we do not contradict the bound of point 3. We show that we can iterate $\Omega(\log_b \maxDeg)$ times. The number of iterations is a lower bound on the runtime of any maximal $b$-matching algorithm (see \Cref{ssec:iteration}).
    \item We show that a lower bound for $\mathcal{G}$ implies a lower bound for trees as well (see \Cref{ssec:trees}).
\end{enumerate}
In the following, $\varepsilon$ is the value given by \Cref{lemma:prelim:lb-graph}.
\subsection{Relation between maximal \texorpdfstring{$b$}{b}-matching and \texorpdfstring{$b$}{b}-grabbing}\label{ssec:relation}
Recall that we restrict our attention to $\maxDeg$-regular graphs.
We now prove that any maximal \(b\)-matching $T$-round algorithm can be converted into a $b$-grabbing $T$-round algorithm with badness that can be upper bounded using the independence number of the input graph.
We assume that we are given an algorithm $\algo$ for maximal $b$-matching. The output of $\algo$ on node $v$ is a subset $P_v$ of $\{1,\ldots,\maxDeg\}$ that indicates which edges incident to $v$ are part of the matching. We consider algorithms that may \emph{fail} with some probability $\bar{p}$. In this context, there are three possible ways to fail:
\begin{itemize}
    \item There exists some node $v$ for which the size of $P_v$ is strictly larger than $b$.
    \item For two neighboring nodes $u$ and $v$, such that $u$ is the $i$th neighbor of $v$ and $v$ is the $j$th neighbor of $u$, it does not hold that $i \in P_v \iff j \in P_u$, that is, $u$ and $v$ do not \emph{agree} on whether the edge $\{u,v\}$ is in the matching or not. An edge is said to be in the matching, or agreed to be in the matching, if both endpoints agree on it being in the matching.
    \item The produced $b$-matching is not maximal.
\end{itemize}

\begin{lemma}\label{lemma:analysis:matching-certified-reduction}
    Any \(T\)-round randomized \local algorithm \(\algo\) for maximal \(b\)-matching with failure probability $\bar{p}$ can be transformed into a no-error \(T\)-round randomized \local algorithm \(\algo'\) for \(b\)-grabbing with expected badness at most \(b \indNum(G)/n + \bar{p}\), where $\indNum(G)$ is the independence number of the input graph $G$. 
\end{lemma}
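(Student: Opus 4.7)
The plan is to construct $\algo'$ by running $\algo$ as a black box and then post-processing each node's output locally, without any extra communication, so that every node labels exactly $b$ of its incident half-edges with $\M$. The key structural observation driving the quality bound is that whenever $\algo$ produces a valid maximal $b$-matching $M$, the set $S$ of unsaturated nodes forms an independent set of $G$: if two adjacent unsaturated nodes $u,v$ both had strictly fewer than $b$ matching edges, then $M \cup \{\{u,v\}\}$ would still be a $b$-matching, contradicting maximality. Hence $|S| \le \indNum(G)$ on the success event.

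Next, $\algo'$ runs $\algo$ for $T$ rounds to compute $P_v \subseteq \{1,\ldots,\maxDeg\}$ at each node $v$, and then $v$ deterministically produces its $b$-grabbing output from $P_v$ alone: if $|P_v| \le b$, it grabs the half-edges in $P_v$ together with the smallest $b - |P_v|$ port numbers outside $P_v$; if $|P_v| > b$, it grabs the $b$ smallest port numbers in $P_v$. Either way, $v$ outputs exactly $b$ $\M$-labels, so $\algo'$ is a no-error $T$-round $b$-grabbing algorithm. Crucially, when $\algo$ does not fail, no node is truncated (since $|P_v| \le b$ for all $v$) and neighbors agree on every edge of $M$, so the padding step only adds half-edges and every edge of $M$ remains matched in the $b$-grabbing output. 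In particular, the $b$-grabbing quality $q$ satisfies $q \ge |M|$ on the success event.

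To bound the expected badness I would split on the failure event $F$, whose probability is at most $p$ by hypothesis. Summing matching degrees over all nodes yields $2|M| = \sum_v k_v \ge b\,(n - |S|) \ge b\,(n - \indNum(G))$, so $|M| \ge b(n - \indNum(G))/2$. Together with $q \ge |M|$ on $F^{C}$, this gives that the badness $1 - 2q/(bn)$ is at most $\indNum(G)/n \le b\,\indNum(G)/n$ whenever $\algo$ succeeds, and trivially at most $1$ otherwise. Taking expectations,
\[
\expect{1 - \tfrac{2q}{bn}} \;\le\; \pr{F^C}\cdot \tfrac{b\,\indNum(G)}{n} + \pr{F}\cdot 1 \;\le\; \tfrac{b\,\indNum(G)}{n} + p,
\]
which is the claimed bound. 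The only real subtlety is the structural observation about unsaturated nodes; once that is in place, the construction of $\algo'$ is designed precisely so that the (rare) failures of $\algo$ cost at most an additive $p$ in expected badness, while success is analyzed via the independence-number inequality.
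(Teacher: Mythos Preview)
Your construction of $\algo'$ and the overall accounting are fine, but the structural observation that drives your quality bound is incorrect for $b \ge 2$. You argue that if two adjacent nodes $u,v$ are both unsaturated then $M \cup \{\{u,v\}\}$ is a larger $b$-matching, contradicting maximality. This reasoning tacitly assumes $\{u,v\} \notin M$; if $\{u,v\}$ is already in $M$ there is no contradiction. Concretely, take $K_{3,3}$ with parts $\{a_1,a_2,a_3\}$, $\{b_1,b_2,b_3\}$ and $b=2$, and let $M=\{\{a_1,b_1\},\{a_1,b_2\},\{a_2,b_1\},\{a_2,b_2\},\{a_3,b_3\}\}$. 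This $M$ is a maximal $2$-matching, yet $a_3$ and $b_3$ are both unsaturated and adjacent. So the unsaturated set $S$ is \emph{not} an independent set in general, and your inequality $|S| \le \indNum(G)$ fails.

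What is true is that every edge of $G[S]$ must lie in $M$ (by exactly the maximality argument you gave, applied only to edges outside $M$), and since each unsaturated node has at most $b-1$ incident matching edges, $G[S]$ has maximum degree at most $b-1$. A greedy $b$-coloring of $G[S]$ then partitions $S$ into $b$ independent sets of $G$, giving $|S| \le b\,\indNum(G)$. Plugging this into your degree-sum computation yields $2|M| \ge b(n - b\,\indNum(G))$ and hence badness at most $b\,\indNum(G)/n$ on the success event, which is precisely the bound you need. In other words, the factor of $b$ in the statement is genuinely required and is not slack you can discard; your line ``$\indNum(G)/n \le b\,\indNum(G)/n$'' papers over the gap but the preceding inequality is what fails.
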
 
\begin{proof}
    The algorithm \(\algo'\) is defined as follows.
    \begin{enumerate}
        \item Each node $v$ spends $T$ rounds to run $\algo$. After this operation, $v$ knows $P_v$. Recall that, with probability at most $\bar{p}$, $\algo$ may have failed on at least one node. There are three possible failure modes: a node may have more than $b$ incident edges in the matching, nodes may not agree on which edges are in the matching, or the $b$-matching is not maximal.
        \item Let $m_v = |P_v|$. If $v$ has more than $b$ incident edges in the matching, node $v$ updates $P_v$ by picking arbitrary $m_v -b$ ports and removing them from $P_v$. Observe that, with probability at least $1-\bar{p}$, no set $P_v$ changed after this operation, and for all edges $e = \{u,v\}$ the nodes $u$ and $v$ agree on whether $e$ is in the matching or not.
        \item Each node $v$ now satisfies $m_v \le b$. Node $v$ outputs $\M$ on each port in $P_v$. Then, it picks other $b - m_v$ ports at random and outputs $\M$ on them. Finally, it outputs $\U$ on all other incident half-edges.
    \end{enumerate}
    Note that, by construction, \(\algo'\) is a \(T\)-round no-error randomized \local algorithm that solves \(b\)-grabbing. We now bound the expected badness of \(\algo'\).
    We consider two cases separately.
    \begin{itemize}
        \item If $\algo$ failed, we can upper bound the badness of $\algo'$ by $1$.
        \item If $\algo$ did not fail, we know that the $b$-matching $M$ produced by $\algo$ is correct and maximal. Let $U$ be the set of unsaturated nodes produced by $\algo$. Consider the subgraph $G' = G[U]$ induced by unsaturated nodes. Observe that each edge of $G'$ must be in $M$, since otherwise $M$ would not be maximal. This implies that $G'$ has maximum degree $b-1$, and hence it can be colored with $b$ colors. Since each color class must be an independent set in $G$, which has independence number $\alpha(G)$, we obtain that the number of nodes of $G'$, and hence the number of unsaturated nodes in $G$, is at most $b \alpha(G)$. Thus, the number of matched edges can be lower bounded by $b(n - b \alpha(G))/2 = b\frac{n}{2}(1 - b \alpha(G)/ n)$. Hence, the badness is upper bounded by $b \alpha(G) / n$.
    \end{itemize}
    By combining the two bounds, we obtain that the expected badness is upper bounded by \[
        1 \cdot \bar{p} + b \alpha(G) / n \cdot (1-\bar{p}) \le  \bar{p} + b \alpha(G) / n
    \]
\end{proof}

\begin{corollary}\label{cor:starting}
    There exists a no-error $T$-round algorithm $\algo'$ that solves $b$-grabbing with expected badness at most $\Theta(b \frac{\log \maxDeg}{\maxDeg})$ in any $G \in \mathcal{G}$, where $T$ is the complexity of the best randomized algorithm for maximal $b$-matching on $\mathcal{G}$ with failure probability at most $b \frac{\log \maxDeg}{\maxDeg}$.
\end{corollary}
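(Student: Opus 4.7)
The plan is to derive this corollary by directly combining the reduction of Lemma~\ref{lemma:analysis:matching-certified-reduction} with the independence number bound provided by Lemma~\ref{lemma:prelim:lb-graph}. Fix a graph $G \in \mathcal{G}$ with $n$ nodes and maximum degree $\maxDeg$, and let $\algo$ be an optimal randomized \local algorithm for maximal $b$-matching on $\mathcal{G}$ with failure probability at most $p = b\log \maxDeg/\maxDeg$; denote its round complexity by $T$.

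First I would feed $\algo$ into the transformation of Lemma~\ref{lemma:analysis:matching-certified-reduction}. The lemma guarantees a no-error $T$-round algorithm $\algo'$ for $b$-grabbing whose expected badness on $G$ is at most
\[
    \frac{b\,\indNum(G)}{n} + p.
\]
Since $G \in \mathcal{G}$, Lemma~\ref{lemma:prelim:lb-graph} yields $\indNum(G) \le \rho\, n \log(\maxDeg)/\maxDeg$ for the universal constant $\rho$ supplied there. Substituting this bound and the choice of $p$ gives the expected badness of $\algo'$ as
\[
    \frac{b \cdot \rho\, n \log \maxDeg / \maxDeg}{n} + \frac{b \log \maxDeg}{\maxDeg}
    \;=\; (\rho + 1)\,\frac{b \log \maxDeg}{\maxDeg}
    \;=\; \Theta\!\left(\frac{b \log \maxDeg}{\maxDeg}\right),
\]
which is exactly the claimed bound.

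There is essentially no obstacle here: both ingredients have already been established, and the statement is the clean composition of the reduction lemma with the structural property of the graph family $\mathcal{G}$. The only minor care needed is to observe that the round complexity $T$ carried over by Lemma~\ref{lemma:analysis:matching-certified-reduction} is precisely the complexity of the best maximal $b$-matching algorithm with the chosen failure probability $p$, so that the resulting $b$-grabbing algorithm $\algo'$ has the same round count $T$ as in the statement. This sets up $\algo'$ as the seed algorithm to which the one-step speedup of Section~\ref{ssec:onestep} will subsequently be applied.
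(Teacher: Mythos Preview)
Your proposal is correct and follows essentially the same route as the paper: apply Lemma~\ref{lemma:analysis:matching-certified-reduction} to the optimal maximal $b$-matching algorithm with failure probability $p = b\log\maxDeg/\maxDeg$, then plug in the independence-number bound $\indNum(G)\le \rho n\log\maxDeg/\maxDeg$ from Lemma~\ref{lemma:prelim:lb-graph} to conclude that the expected badness is $\Theta(b\log\maxDeg/\maxDeg)$.
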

\begin{proof}
    Let $\algo$ be an algorithm that solves maximal $b$-matching with failure probability at most $b \frac{\log \maxDeg}{\maxDeg}$ in any $G \in \mathcal{G}$.
    By \Cref{lemma:prelim:lb-graph}, $\alpha(G) \le \rho n \log (\maxDeg) / \maxDeg$. By applying \Cref{lemma:analysis:matching-certified-reduction}, we obtain a no-error $b$-grabbing algorithm with expected badness upper bounded by $\Theta(b \frac{\log \maxDeg}{\maxDeg} + b \frac{\log \maxDeg}{\maxDeg})$. Hence, the claim follows.
\end{proof}

\subsection{Round elimination via self-reduction}\label{sec:analysis:round-elimination}\label{ssec:onestep}
We devote this subsection to proving the following statement.
\begin{lemma}\label{lem:onestep}
    Let $T \le \frac{\varepsilon}{4} \log_\maxDeg n$.
    Let $\algo$ be a no-error $T$-round algorithm that solves $b$-grabbing with expected badness upper bounded by $p$. Then, there exists a no-error $(T-1)$-round algorithm $\algo'$ that solves  $b$-grabbing with expected badness upper bounded by $c \cdot \sqrt{b} \cdot p$, for some universal constant $c > 1$.
\end{lemma}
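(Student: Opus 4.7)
The plan is to construct $\algo'$ via a \emph{view simulation} argument. In $T-1$ rounds, each node $v$ gathers its radius-$(T-1)$ view of $G$. Because $T \le \frac{\varepsilon}{4}\log_\maxDeg n$, \Cref{lemma:prelim:lb-graph} ensures $\girth(G) > 2T$, so every radius-$T$ view is a $\maxDeg$-regular tree. For each neighbor $u$ of $v$, the inner portion of $u$'s radius-$T$ view (the part at distance $\le T-1$ from $v$) is already contained in $v$'s view, while the remaining portion is a disjoint union of $\maxDeg$-regular subtrees dangling off the boundary. Using shared randomness, seeded in a canonical way by identifiers along the known boundary of these subtrees, node $v$ samples fresh identifiers, port numbers, and private coin tosses for the unknown nodes. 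Since identifiers in the randomized \local model are i.i.d.\ uniform, port numbers are random permutations, and private coins are i.i.d., the resulting simulated view $\widetilde{\view}_T(u)$ has the same distribution as the real $\view_T(u)$ and is consistent across all nodes that simulate $u$. Node $v$ then computes $\hat{g}_{v,e} \coloneqq \mathds{1}[\algo\text{ on }\widetilde{\view}_T(u)\text{ grabs }e]$ for every incident edge $e=\{u,v\}$ and outputs $\M$ on the $b$ half-edges with the largest $\hat{g}_{v,e}$ (tie-breaking via shared randomness) and $\U$ on the rest. By construction, $\algo'$ is a no-error $(T-1)$-round $b$-grabbing algorithm.

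For the quality analysis, the distributional equivalence between $\widetilde{\view}_T(u)$ and $\view_T(u)$ gives $\mathbb{E}[\hat{g}_{v,e}] = \Pr_\algo[u \text{ grabs } e]$ for each incident edge and, more importantly, $\mathbb{E}\bigl[\sum_e \hat{g}_{v,e}\hat{g}_{u,e}\bigr]$ coincides with the expected number of matches in $\algo$, namely $\tfrac{bn}{2}(1-p)$. Let $S_v \coloneqq \sum_{e \ni v}\hat{g}_{v,e}$; by $\maxDeg$-regularity and identifier symmetry, $\mathbb{E}[S_v]=b$. An edge $e=\{u,v\}$ with $\hat{g}_{v,e}=\hat{g}_{u,e}=1$ fails to be matched by $\algo'$ only if it is discarded from $v$'s top-$b$ set (which requires $S_v > b$) or symmetrically from $u$'s top-$b$. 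Hence the expected quality of $\algo'$ is at least $\tfrac{bn}{2}(1-p) - \sum_v \mathbb{E}[\max(S_v-b,0)]$, and the proof reduces to bounding the sum $\sum_v \mathbb{E}[\max(S_v-b,0)]$ by $O(\sqrt{b}\cdot p\cdot bn)$.

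The main obstacle is precisely this last bound. Approximate independence of the terms of $S_v$, granted by the tree structure and the distributional equivalence, yields $\variance[S_v]=O(b)$ and hence $\mathbb{E}[\max(S_v-b,0)]=O(\sqrt{b})$ per node, which summed over $v$ only gives an additive $O(1/\sqrt{b})$ correction to the badness — insufficient for a multiplicative $c\sqrt{b}$ factor when $p$ is small. To sharpen the bound one must tie the event $\{S_v > b\}$ to the under-saturation of $v$ under $\algo$: whenever $\algo$ saturates $v$, all $b$ matched edges are grabbed from both sides, so the "excess" $(S_v-b)$ at $v$ is driven by the asymmetric grabs that arise only from unsaturated incident edges. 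The hope is to leverage this by a joint second-moment computation on $S_v$ and the number of matches $M_v$ of $v$ in $\algo$, using the global identity $\sum_v \mathbb{E}[b-M_v]=bnp$ to transfer the per-node deviation onto the total badness and to produce an average per-node error of $O(\sqrt{b}\cdot p_v\cdot b)$, where $p_v$ is the unsaturation probability of $v$. Making this interplay between concentration, saturation, and the badness parameter rigorous is the technical crux of the argument, and is where the simplification over the $\delta$-good flower machinery of \cite{khoury2025} must deliver the cleanest gains.
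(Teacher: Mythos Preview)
Your proposal has a genuine gap at precisely the step you flag as ``the technical crux,'' and your choice of $\algo'$ is the reason it does not close. You need $\sum_v \mathbb{E}[\max(S_v-b,0)]=O(\sqrt{b}\,p\,bn)$, i.e., a bound that \emph{vanishes with $p$}; but $S_v$ is built from single fresh samples of the neighbors' grabs, so its fluctuation around $b$ is governed by the variance of those Bernoullis and has no reason to shrink as the badness of $\algo$ does. The suggested fix of tying $\{S_v>b\}$ to unsaturation of $v$ under $\algo$ does not follow from your setup, because your simulated outer rings are decoupled from the real run of $\algo$. There is also a more basic issue: the equality $\mathbb{E}\bigl[\sum_e \hat g_{v,e}\hat g_{u,e}\bigr]=\tfrac{bn}{2}(1-p)$ is not a consequence of marginal distributional equivalence. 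Indeed, $\hat g_{u,e}$ uses the \emph{real} depth-$(T{-}1)$ layer on $u$'s side (which lies in $u$'s radius-$(T{-}1)$ view), whereas $\hat g_{v,e}$ replaces that same layer by a fresh sample; the product expectation therefore differs from the real match probability by a covariance term that need not vanish.

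The paper's $\algo'$ is different and sidesteps both issues. Node $v$ does not sample neighbors; it computes, for each of its \emph{own} ports $i$, the exact probability $x_i(v)=\expectView{T-1}{X_i(v)}$ that $\algo$ outputs $\M$ there, over all extensions of $\view_{T-1}(v,G)$ to a radius-$T$ view, and grabs the $b$ ports with the largest $x_i(v)$. The loss in quality relative to $\algo$ is then exactly $\mathbb{E}\bigl[\sum_v S_{\maxDeg,b}(v)\bigr]$, where $S_{\maxDeg,b}(v)=b-(\text{sum of the $b$ largest }x_i(v))$ is a deterministic function of the $(T{-}1)$-view. The key step (\Cref{lemma:analysis:algorithm-A0-expected-MU}, resting on the anti-concentration statement of \Cref{lem:probability-deviation}) is the inequality $\expectView{T-1}{E^{(0)}_{\M\U}(v)}\ge S_{\maxDeg,b}(v)/(1000\sqrt b)$: by the girth assumption the grabs of $v$'s neighbors towards $v$ are mutually independent given $\view_{T-1}(v,G)$, and a symmetrization/Khintchine argument shows their sum deviates from $b$ by at least $S_{\maxDeg,b}(v)/O(\sqrt b)$ in expectation, each unit of deviation forcing an $\M\U$ edge at $v$. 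Summing over $v$ yields $\mathbb{E}\bigl[\sum_v S_{\maxDeg,b}(v)\bigr]\le O(\sqrt b)\cdot \mathbb{E}\bigl[E^{(0)}_{\M\U}\bigr]\le O(\sqrt b)\cdot bnp$, which is exactly the multiplicative control your approach is missing.
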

In the following, let $p_0 = p$, $p_1 = c \cdot \sqrt{b} \cdot p$, $\algo_0 = \algo$, $\algo_1 = \algo'$, and let \(G\) be any input graph.

\paragraph{Some notation.}
For a random variable \(X\) whose probability space is given by the private random bit assignments to the nodes of $G$ and the shared randomness assigned to the nodes, we denote by \(\expect{X}\) the expectation of $X$, and by \(\expectView{T}{X}\) the expectation of $X$ conditioned on the fact that the radius-\(T\) view of the node \(v\) is isomorphic to \(\view_{T}(v,G)\).

For all \(x,y \in \{\M,\U\}\) and all \(i \in \{0,1\}\), we define \(E^{(i)}_{xy}\) and \(H^{(i)}_{x}\) to be the number of edges that are labeled \(xy\) and the number of half-edges that are labeled \(x\), respectively, by \(\algo_i\).

When a node \(v\) is specified, we denote by \(E^{(i)}_{xy}(v)\) and \(H^{(i)}_{x}(v)\) the number of edges incident to \(v\) that are labeled \(xy\) or \(yx\) and the number of half-edges incident to \(v\) that are labeled \(x\), respectively, by \(\algo_i\).

\paragraph{Preferred directions and edges.}
Let $v \in V(G)$. For each $i \in [\maxDeg]$, let $X_i(v)$ be the indicator random variable that is equal to $1$ if $\algo_0$ outputs $\M$ on the $i$th port of $v$, and $0$ otherwise. 
Let $x_i(v) = \expectView{T-1}{ X_i(v) }$. In other words, $x_i(v)$ is the fraction of extensions of the radius-$(T-1)$ view of $v$ into radius-$T$ views, such that $\algo_0$ outputs $\M$ on the $i$th port of $v$.

The \(b\) indices \(i_1, \ldots, i_b\), such that \(x_{i_1}(v), \ldots, x_{i_b}(v)\) are the largest among \(x_1(v), \ldots, x_{\maxDeg}(v)\), are called the \emph{preferred directions} of \(v\), with ties broken by giving priority to the smallest subscripts. 
An edge (resp.\ half-edge) is a \emph{preferred edge} (resp.\ \emph{preferred half-edge}) for $v$ if it is incident on a port that is a preferred direction.
    We furthermore define \(S(v)\) to be the sum of \(x_1(v),\ldots,x_{\maxDeg}(v)\), we define \(S_i(v)\) to be the sum of the \(i\) largest values among \(x_1(v),\ldots,x_{\maxDeg}(v)\), and \(S_{\maxDeg,i}(v) = S(v) - S_i (v)\).

By linearity of expectation, and the fact that $\algo_0$ outputs $\M$ exactly $b$ times on the ports of $v$, we obtain the following.

\begin{observation}
    Let \(G\) be any input graph.
    For each node \(v \in V\), it holds that \(S(v) = b\).
\end{observation}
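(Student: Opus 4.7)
The plan is to use linearity of expectation together with the defining property of the $b$-grabbing problem. Recall that $\algo_0$ is a no-error algorithm for $b$-grabbing, which by definition means that on every input graph and every outcome of the randomness, every node labels exactly $b$ of its incident half-edges with $\M$. Translating this into the notation of the observation, for each node $v$ and each realization of the randomness we deterministically have $\sum_{i=1}^{\maxDeg} X_i(v) = b$, since $X_i(v)$ is the indicator that the $i$-th port of $v$ is labelled $\M$ by $\algo_0$.

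From here the proof is a one-line calculation. By linearity of expectation, and pulling the conditioning on $\view_{T-1}(v,G)$ through the sum,
\[
    S(v) \;=\; \sum_{i=1}^{\maxDeg} x_i(v) \;=\; \sum_{i=1}^{\maxDeg} \expectView{T-1}{X_i(v)} \;=\; \expectView{T-1}{\sum_{i=1}^{\maxDeg} X_i(v)} \;=\; \expectView{T-1}{b} \;=\; b.
\]
There is no genuine obstacle here: the only subtlety worth spelling out is that the equality $\sum_i X_i(v) = b$ holds pointwise (not just in expectation) precisely because $\algo_0$ is no-error, which is why we can conclude the exact identity $S(v) = b$ rather than merely an inequality. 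The same argument would of course apply under any conditioning event, in particular the conditioning on $\view_{T-1}(v,G)$ used to define $x_i(v)$.
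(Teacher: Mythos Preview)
Your argument is correct and matches the paper's own justification, which is simply the sentence ``By linearity of expectation, and the fact that $\algo_0$ outputs $\M$ exactly $b$ times on the ports of $v$'' preceding the observation. You have just written out that one-line calculation explicitly.
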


\paragraph{Definition of $\algo_1$.}
We define $\algo_1$ as follows.
Fix the input graph \(G\), and fix any node \(v \in V\).
The node \(v\) gathers its radius-\((T-1)\) view \(\view_{T-1}(v,G)\), computes its \(b\) preferred directions, outputs $\M$ on the preferred half-edges, and $\U$ on all the others.

In other words, each node $v$ gathers its radius-$(T-1)$ view $\mathcal{V}$. Then, it considers all possible extensions of $\mathcal{V}$ into radius-$T$ views, it simulates $\algo_0$ on all of them, and for each incident half-edge $h$ it computes the fraction of times in which $\algo_0$ outputs $\M$ on $h$.
Then, it outputs $\M$ on the $b$ ports with the highest fractions, and $\U$ on all the others.

\paragraph{Almost determinism.}
We now prove that, in some sense, the algorithm $\algo_0$ must be almost deterministic.
Informally, $S_{\maxDeg,b}(v)$ is the fraction of times in which $\algo_0$ outputs $\M$ on a direction that is not preferred.
We prove that, if the sum of $S_{\maxDeg,b}(v)$ is large, then $\algo_0$ must have large badness. 
For this purpose, in the full version of this paper \cite{full_version_arxiv}, we introduce a statement about probability; for the case in which $b=1$, we also give a much simpler proof.

\begin{lemma}\label{lem:probability-deviation}
    Let $x_1, \ldots, x_\maxDeg$. 
    Assume that $\sum{x_i} = b$.
    Let $Y_1,\ldots,Y_\maxDeg$ be mutually independent events, where $Y_i$ has probability $y_i$. 
    Define $S_b$ to be the sum of the largest $b$ values in $x_1,\ldots,x_\maxDeg$, and let $S_{\maxDeg,b} = b - S_b$.
    Let $Y$ be the number of events that occur among $Y_1,\ldots,Y_\maxDeg$. Assume that $\sum_{i \in [\maxDeg]}\abs{x_i - y_i} < S_{\maxDeg,b}/(1000 \sqrt{b})$.
    Then, $\expect{\abs{b - Y}} \ge S_{\maxDeg,b} / (1000 \sqrt{b})$.
\end{lemma}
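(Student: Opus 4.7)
The plan is to lower bound $\expect{|Y - b|}$ by first showing that the constraint $\sum_i X_i = b$ forces the quantity $\sum_i x_i(1-x_i)$ to be large whenever $S_{\maxDeg,b}$ is, then transferring this to a lower bound on $\variance{Y}$ via the $\ell_1$-closeness of $(y_i)$ to $(x_i)$, and finally invoking anti-concentration for sums of independent Bernoullis.

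I will first prove $\sum_i x_i(1-x_i) \ge S_{\maxDeg,b}^2/b$. Order the $x_i$'s so that $x_{(1)} \ge \cdots \ge x_{(\maxDeg)}$. For the top $b$ indices, $\sum_{j \le b} x_{(j)}^2 \le S_b$ since each $x_{(j)} \le 1$. For $j > b$, $x_{(j)} \le x_{(b)} \le S_b/b$ (because the top $b$ values average $S_b/b$ and $x_{(b)}$ is the smallest among them), so $\sum_{j > b} x_{(j)}^2 \le (S_b/b)\, S_{\maxDeg,b}$. Adding these and using $S_b = b - S_{\maxDeg,b}$ yields $\sum_i x_i^2 \le b - S_{\maxDeg,b}^2/b$, which gives the claimed bound. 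Using the identity $y(1-y) - x(1-x) = (y-x)(1 - x - y)$ together with $|1 - x - y| \le 1$ on $[0,1]^2$, I obtain $|\variance{Y} - \sum_i x_i(1-x_i)| \le \sum_i |y_i - x_i| < S_{\maxDeg,b}/(1000\sqrt{b})$, and likewise $|\expect{Y} - b| \le \sum_i |y_i - x_i| < S_{\maxDeg,b}/(1000\sqrt{b})$. Consequently $\variance{Y} \ge S_{\maxDeg,b}^2/b - S_{\maxDeg,b}/(1000\sqrt{b})$.

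I then split into two regimes. When $S_{\maxDeg,b} \ge C\sqrt{b}$ for a suitably large absolute constant $C$, $\variance{Y}$ exceeds an absolute threshold, and a standard mean-absolute-deviation bound for sums of independent Bernoullis (derivable from Berry--Esseen) gives $\expect{|Y - \expect{Y}|} \ge c_0 \sqrt{\variance{Y}} \ge c_0\, S_{\maxDeg,b}/\sqrt{2b}$ for some universal $c_0 > 0$; the triangle inequality $\expect{|Y - b|} \ge \expect{|Y - \expect{Y}|} - |\expect{Y} - b|$ then beats the target by a wide margin. When $S_{\maxDeg,b} < C\sqrt{b}$, the target $S_{\maxDeg,b}/(1000\sqrt{b})$ is a small absolute constant, and I will use $\expect{|Y - b|} \ge \pr{Y \ne b}$ and bound $\pr{Y \ne b}$ directly. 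Splitting $Y = Y_{\mathrm{top}} + Y_{\mathrm{bot}}$ along the $b$ preferred directions, the event $\{Y \ne b\}$ contains $\{Y_{\mathrm{top}} < b,\ Y_{\mathrm{bot}} = 0\}$, whose probability equals $\bigl(1 - \prod_{\mathrm{top}} y_i\bigr)\prod_{\mathrm{bot}}(1 - y_j)$. Weierstrass-type bounds combined with the $\ell_1$-closeness of $(y_i)$ to $(x_i)$ give $\prod_{\mathrm{top}} y_i \le \exp\bigl(-(1 - 1/(1000\sqrt{b}))\, S_{\maxDeg,b}\bigr)$ and $\prod_{\mathrm{bot}}(1 - y_j) \ge 1 - (1 + 1/(1000\sqrt{b}))\, S_{\maxDeg,b}$, from which a short case split on $S_{\maxDeg,b}$ extracts the desired $S_{\maxDeg,b}/(1000\sqrt{b})$.

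I expect the main obstacle to be the middle range in which $S_{\maxDeg,b}$ is a moderate constant: there the variance-based argument is not yet strong enough, and both factors $1 - \prod_{\mathrm{top}} y_i$ and $\prod_{\mathrm{bot}}(1 - y_j)$ are only bounded below by constants (not close to $1$), so care is needed to verify that their product still exceeds the target. Tracking explicit constants will also be required to absorb the $S_{\maxDeg,b}/(1000\sqrt{b})$ slack picked up in the transfer from $\sum_i x_i(1-x_i)$ to $\variance{Y}$ and in the Berry--Esseen error, and to justify the choice of threshold $C\sqrt{b}$ between the two regimes.
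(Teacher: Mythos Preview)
Your overall strategy (lower bound the ``variance'' of $Y$ from $S_{\maxDeg,b}$, then invoke anti-concentration) is the right one and matches the paper's, but two ingredients are too weak and leave a genuine uncovered range. First, your bound $\sum_i x_i(1-x_i)\ge S_{\maxDeg,b}^2/b$ is correct but far from sharp; the truth is $\sum_i x_i(1-x_i)\ge S_{\maxDeg,b}/2$, which follows from the combinatorial inequality $\sum_i \min\{x_i,1-x_i\}\ge S_{\maxDeg,b}$ together with $2t(1-t)\ge\min\{t,1-t\}$. With only $S_{\maxDeg,b}^2/b$, your Berry--Esseen regime (which needs $\variance{Y}$ above an absolute constant) only begins at $S_{\maxDeg,b}\ge C\sqrt{b}$, not at $S_{\maxDeg,b}\ge C$. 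Second, your small-regime event $\{Y_{\mathrm{top}}<b,\ Y_{\mathrm{bot}}=0\}$ is far too restrictive: $\pr{Y_{\mathrm{bot}}=0}$ can be as small as $e^{-\Omega(S_{\maxDeg,b})}$. For a concrete failure, take $b=10^4$, $x_1=\dots=x_{9800}=1$, $x_{9801}=\dots=x_{10200}=1/2$, rest $0$, and $y_i=x_i$. Here $S_{\maxDeg,b}=100<C\sqrt{b}$ (so you are in your ``small'' regime), the target is $10^{-3}$, yet $\pr{Y_{\mathrm{bot}}=0}=(1/2)^{200}\approx 10^{-60}$. No amount of constant-tracking fixes this; the ``middle range'' you worry about is not a moderate constant but the whole interval $[\Theta(1),\Theta(\sqrt{b})]$.

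The paper closes this gap by (i) proving the sharp bound $\sum_i \min\{x_i,1-x_i\}\ge S_{\maxDeg,b}$ directly, and (ii) replacing Berry--Esseen by a universal mean-absolute-deviation bound valid for \emph{all} variance levels: symmetrize with an independent copy $Z'$ of $Z=Y-\expect{Y}$, write $Z-Z'=\sum_i \varepsilon_i J_i$ with $\varepsilon_i$ Rademacher and $J_i\sim\bern(2y_i(1-y_i))$, condition on $J=\sum J_i$ and apply Khintchine to get $\expect{|Z-Z'|}\ge \expect{\sqrt{J/2}}$, then bound $\expect{\sqrt{J}}$ by Paley--Zygmund. This yields $\expect{|Z|}\ge c\min\{\Lambda,\sqrt{\Lambda}\}$ with $\Lambda=\sum 2y_i(1-y_i)\ge(1-o(1))S_{\maxDeg,b}$, and since $S_{\maxDeg,b}\le b$ both branches dominate $S_{\maxDeg,b}/\sqrt{b}$. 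If you plug the sharp variance bound into your framework and swap Berry--Esseen for this symmetrization argument, your proof goes through without any separate ``small regime'' case.
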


We now prove that the sum of $S_{\maxDeg,b}(v)$ over all nodes $v$ cannot be too large.
\begin{lemma}\label{lemma:analysis:algorithm-A0-expected-MU}
    Fix any input graph \(G\) and any node \(v \in V\).
    It holds that 
    \[
        \expectView{T-1}{E^{(0)}_{\M\U}(v)} \ge S_{\maxDeg,b}(v)/(1000\sqrt{b}).
    \]
\end{lemma}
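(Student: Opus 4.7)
The plan is to apply \Cref{lem:probability-deviation} to the conditional distribution given $\view_{T-1}(v,G)$. For each port $i$, let $u_i$ be the neighbor of $v$ reached through port $i$, let $Y_i$ be the indicator that $u_i$ outputs $\M$ on the half-edge of $\{v,u_i\}$ incident to $u_i$, and let $y_i := \expectView{T-1}{Y_i}$. Since $E^{(0)}_{\M\U}(v) = \sum_{i=1}^{\maxDeg}\mathds{1}\{X_i(v) \ne Y_i\}$, linearity of conditional expectation gives
\[
\expectView{T-1}{E^{(0)}_{\M\U}(v)} = \sum_{i=1}^{\maxDeg}\prView{T-1}{X_i(v) \ne Y_i},
\]
which I will bound below.

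Among the hypotheses of \Cref{lem:probability-deviation}, the identity $\sum_i x_i(v) = b$ and the requirement that exactly $b$ of the events $\{X_i(v) = 1\}$ occur are immediate, since $\algo_0$ is a no-error $b$-grabbing algorithm and therefore always outputs $\M$ on exactly $b$ ports of $v$. I will then split into two cases according to whether $\sum_i |x_i(v) - y_i| < S_{\maxDeg,b}(v)/(1000\sqrt b)$. If the sum is \emph{above} the threshold, I use the term-by-term inequality $\prView{T-1}{X_i(v) \ne Y_i} \ge |x_i(v) - y_i|$, which follows from the identity $x_i(v) - y_i = \prView{T-1}{X_i(v) = 1, Y_i = 0} - \prView{T-1}{X_i(v) = 0, Y_i = 1}$ together with the triangle inequality, to obtain $\expectView{T-1}{E^{(0)}_{\M\U}(v)} \ge \sum_i |x_i(v) - y_i| \ge S_{\maxDeg,b}(v)/(1000\sqrt b)$ directly. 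If the sum is \emph{below} the threshold, I apply \Cref{lem:probability-deviation} to obtain $\expectView{T-1}{|b - Y|} \ge S_{\maxDeg,b}(v)/(1000\sqrt b)$ for $Y := \sum_i Y_i$, and combine this with the pointwise bound $E^{(0)}_{\M\U}(v) = \sum_i |X_i(v) - Y_i| \ge |\sum_i(X_i(v) - Y_i)| = |b - Y|$ (valid because $\sum_i X_i(v) = b$ pointwise) to close the case.

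The main obstacle is the remaining hypothesis of \Cref{lem:probability-deviation}, namely the mutual independence of $Y_1,\ldots,Y_\maxDeg$ conditional on $\view_{T-1}(v,G)$. I plan to exploit the girth of $G$: by \Cref{lemma:prelim:lb-graph} and the assumption $T \le \frac{\varepsilon}{4}\log_\maxDeg n$, the $2T$-ball around $v$ in $G$ is a tree. Each $Y_i$ is a deterministic function of $\view_T(u_i,G)$, and conditioning on $\view_{T-1}(v,G)$ fixes the identifiers, port numbers, shared randomness, and private random bits at every node within distance $T-1$ of $v$. The unconditioned randomness on which $Y_i$ can still depend is therefore confined to the unique non-$v$ subtree of $u_i$, specifically the nodes at distance $T$ or $T+1$ from $v$ through $u_i$. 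By the tree-like local structure, these subtrees are pairwise disjoint across $i$, so the bundles of unconditioned random labels feeding the different $Y_i$'s are independent, which yields the required mutual independence and completes the argument.
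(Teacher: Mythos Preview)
Your proof is correct and follows essentially the same approach as the paper: define the neighbor indicators $Y_i$, use the girth assumption to get conditional mutual independence of the $Y_i$'s, split into two cases according to $\sum_i|x_i(v)-y_i|$, and in the small-deviation case apply \Cref{lem:probability-deviation} after the pointwise bound $E^{(0)}_{\M\U}(v)\ge|b-Y|$. The only cosmetic difference is that the paper phrases the large-deviation case via Jensen's inequality $\expect{|\mathds{1}_{X_i}-\mathds{1}_{Y_i}|}\ge|\expect{\mathds{1}_{X_i}-\mathds{1}_{Y_i}}|$, while you unpack the same bound as $\Pr[X_i\ne Y_i]\ge|x_i-y_i|$; your independence argument is also spelled out in slightly more detail than the paper's.
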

\begin{proof}
    Let \(v_1, \ldots, v_\maxDeg\) be the \(\maxDeg\) neighbors of \(v\), following the ordering given by the port numbers of \(v\).
    Accordingly, let \(e_1 = \{v,v_1\}, \ldots, e_\maxDeg = \{v,v_{\maxDeg}\}\).
    Let $Y_i$ be the event where $\algo_0$ outputs $\M$ on the port of $v_i$ incident to $e_i$, when conditioned on the $(T-1)$-view of $v$ being $\view_{T-1}(v,G)$, and 
    let $X_i$ be the event where $\algo_0$ outputs $\M$ on the port of $v$ incident to $e_i$, when conditioned on the $(T-1)$-view of $v$ being $\view_{T-1}(v,G)$. 
    Since the girth of $G$ is at least $\varepsilon \log_\Delta n$, and by assumption $T \le \frac{\varepsilon}{4} \log_\maxDeg n$, we get that the events $Y_i$ are mutually independent.
    This holds even if we assume shared randomness. 
    Note that the intersection of the $T$-views of $v_i$ and $v_j$ with $i \neq j$ is exactly the $(T-1)$-view of $v$, because of the value of $T$ with respect to the girth of $G$.
    Moreover, note that \(X_i = (X_i(v) \mid \view_{T-1}(v,G))\), that is, the random variable \(X_i(v)\) conditioned on the $(T-1)$-view of $v$ being $\view_{T-1}(v,G)$.
    Let $y_i = \expect{ \mathds{1}_{Y_i} }$, and \(x_i = \expect{ \mathds{1}_{X_i} }\).
    In other words, $y_i$ is the fraction of times in which $v_i$ outputs $\M$ towards $v$ conditioned on the fixed $(T-1)$-view of $v$, while \(x_i = x_i(v)\) as defined in the paragraph dedicated to the preferred directions of \(v\).
    Let $Y = \sum \mathds{1}_{Y_i}$. 
    Observe that $Y$ is the number of $\M$ labels that the neighbors of $v$ output towards $v$, conditioned on the fixed $(T-1)$-view of $v$. Note that the number of edges incident to $v$ labeled $\M \U$ or $\U \M$ by $\algo_0$ is exactly $\sum_{i \in [\maxDeg]}\abs{\mathds{1}_{X_i} - \mathds{1}_{Y_i}}$. 
    By linearity of expectation, we have that $\expectView{T-1}{E^{(0)}_{\M\U}(v)} = \sum_{i \in [\maxDeg]} \expect{\abs{\mathds{1}_{X_i} - \mathds{1}_{Y_i}}}$.
    By Jensen's inequality, we obtain that \(\expectView{T-1}{E^{(0)}_{\M\U}(v)} \ge \sum_{i \in [\maxDeg]} \abs{\expect{\mathds{1}_{X_i} - \mathds{1}_{Y_i}}}\), with \(\expect{\mathds{1}_{X_i} - \mathds{1}_{Y_i}} = x_i - y_i\).
    If $\sum_{i \in [\maxDeg]}\abs{x_i - y_i} \ge S_{\maxDeg,b}/(1000 \sqrt{b})$, then the claim directly follows. 
    Suppose now that $\sum_{i \in [\maxDeg]}\abs{x_i - y_i} < S_{\maxDeg,b}/(1000 \sqrt{b})$.
    By the triangular inequality, it holds that \(\sum_{i \in [\maxDeg]}\abs{\mathds{1}_{X_i} - \mathds{1}_{Y_i}} \ge \abs{\sum_{i \in [\maxDeg]}\mathds{1}_{X_i} - \mathds{1}_{Y_i}}\), and observe that \(\abs{\sum_{i \in [\maxDeg]}\mathds{1}_{X_i} - \mathds{1}_{Y_i}} = \abs{b - Y}\).
    Hence, $\expectView{T-1}{E^{(0)}_{\M\U}(v)} \ge \expect{\abs{b - Y}}$.
    By applying \Cref{lem:probability-deviation}, we obtain that $\expect{\abs{b - Y}} \ge S_{\maxDeg,b}(v)/(1000\sqrt{b})$. Hence, the claim follows.
\end{proof}

\paragraph{Algorithm $\algo_1$ matches many edges.}
We now prove that $\algo_1$ has low badness, implying \Cref{lem:onestep}. On a high level, we consider the edges $\mathcal{E}$ in which $\algo_0$ outputs $\M \M$ but $\algo_1$ does not. All these edges must be \emph{non-preferred} by at least one endpoint. Moreover, for each node $v$, $S_{\maxDeg,b}(v)$ is the expected number of $\M$ that $\algo_0$ outputs on the non-preferred directions of $v$, when we consider all possible extensions of the radius-$(T-1)$ view of $v$ into radius-$T$ views. Hence, we can upper bound the expected size of $\mathcal{E}$ as a function of the sum of the $S_{\maxDeg,b}(v)$ values, which is given by \Cref{lemma:analysis:algorithm-A0-expected-MU}. If $\mathcal{E}$ is small, we get that many of the edges that are matched by $\algo_0$ are also matched by $\algo_1$.

\begin{lemma}\label{lemma:analysis:algorithm-A1-expected-MM}
    The expected badness of $\algo_1$ is upper bounded by $p_0(1 + 4000\sqrt{b})$.
\end{lemma}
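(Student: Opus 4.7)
The plan is to lower bound the expected number of matched edges produced by $\algo_1$ in terms of those produced by $\algo_0$, and then translate this into the desired bound on expected badness via the identity $bn = 2q + E_{\M\U}$ (so that the badness of a $b$-grabbing outcome equals $E_{\M\U}/(bn)$). The key structural observation is that, by construction, $\algo_1$ matches an edge $e = \{u,v\}$ if and only if both $u$ and $v$ include $e$ among their $b$ preferred directions; consequently, every edge matched by $\algo_0$ that happens to be preferred by both endpoints is also matched by $\algo_1$. Writing $q_0$ and $q_1$ for the matched-edge counts of $\algo_0$ and $\algo_1$, this gives $q_0 - q_1 \le L$, where $L$ is the number of edges matched by $\algo_0$ for which at least one endpoint does not prefer it.

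The main work is to bound $\expect{L}$ using \Cref{lemma:analysis:algorithm-A0-expected-MU}. For each node $v$, let $N_v$ count the non-preferred incident ports of $v$ on which $\algo_0$ outputs $\M$. Every edge contributing to $L$ is labeled $\M\M$ by $\algo_0$ and is non-preferred by at least one endpoint, so it is counted at least once by $\sum_v N_v$; hence $L \le \sum_v N_v$. Conditioning on $\view_{T-1}(v,G)$ simultaneously determines the set of preferred directions of $v$ and fixes the values $x_1(v), \ldots, x_\maxDeg(v)$, so by linearity $\expectView{T-1}{N_v} \le \sum_{i \text{ non-preferred}} x_i(v) = S_{\maxDeg,b}(v)$; taking the outer expectation yields $\expect{N_v} \le \expect{S_{\maxDeg,b}(v)}$. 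Applying \Cref{lemma:analysis:algorithm-A0-expected-MU} conditionally and taking expectations gives $\expect{S_{\maxDeg,b}(v)} \le 1000\sqrt{b}\,\expect{E^{(0)}_{\M\U}(v)}$. Summing over $v$ and using $\sum_v E^{(0)}_{\M\U}(v) = 2\,E^{(0)}_{\M\U}$ (since each $\M\U$ edge is counted by both of its endpoints) yields $\expect{L} \le O(\sqrt{b})\,\expect{E^{(0)}_{\M\U}} = O(\sqrt{b})\,p_0\,bn$.

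To finish, $\expect{E^{(1)}_{\M\U}} = bn - 2\expect{q_1} \le bn - 2\expect{q_0} + 2\expect{L} = \expect{E^{(0)}_{\M\U}} + 2\expect{L}$, and dividing by $bn$ gives $\expect{p_1} \le p_0 (1 + O(\sqrt{b}))$, matching the claim up to the precise constant. The main subtlety is twofold: one must be careful about the potential double counting in $\sum_v N_v$ for edges whose both endpoints fail to prefer them (harmless because we only need $L \le \sum_v N_v$, not equality), and one must argue that the preferred set and the $x_i(v)$ values are both measurable with respect to $\view_{T-1}(v,G)$ so that they behave like deterministic quantities after the conditioning. Neither is a genuine obstacle; the heart of the argument is really the structural inequality $q_0 - q_1 \le \sum_v N_v$, after which \Cref{lemma:analysis:algorithm-A0-expected-MU} does the rest.
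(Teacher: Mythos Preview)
Your proof is correct and follows essentially the same route as the paper. Your quantities $q_i$, $L$, and $\sum_v N_v$ are exactly the paper's $E^{(i)}_{\M\M}$, $E^{(\text{wrong})}_{\M\M}$, and $H^{(\text{wrong})}_\M$, respectively; the chain $q_1 \ge q_0 - L$, $L \le \sum_v N_v$, $\expect{N_v} = \expect{S_{\maxDeg,b}(v)} \le 1000\sqrt{b}\,\expect{E^{(0)}_{\M\U}(v)}$ is precisely the paper's argument, and your acknowledged constant discrepancy is immaterial since the enclosing \Cref{lem:onestep} only requires some universal constant $c$.
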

\begin{proof}
    By the definition of badness, we need to prove that  \(\expect{E^{(1)}_{\M\M}} \ge b\frac{n}{2}(1 - p_0(1 + 4000\sqrt{b}))\).
    We define \(E^{(0,1)}_{\M\M}\) to be the number of edges that are labeled \(\M\M\) by both \(\algo_0\) and \(\algo_1\).
    It holds that
    \[
        \expect{E^{(1)}_{\M\M}} \ge \expect{E^{(0,1)}_{\M\M}}.
    \]
    Now, define \(E^{(\text{wrong})}_{\M\M}\) to be the number of edges that \(\algo_0\) labels \(\M\M\) but \(\algo_1\) does not label \(\M\M\).
    At the same time
    \begin{align*}
        \expect{E^{(0,1)}_{\M\M}} & = \expect{E^{(0)}_{\M\M}} - \expect{E^{(\text{wrong})}_{\M\M}}.
    \end{align*}
    Let \(H^{(\text{wrong})}_{\M}\) be the number of half-edges that \(\algo_0\) labels \(\M\) and \(\algo_1\) labels \(\U\).
    It follows that
    \begin{align*}
        \expect{E^{(\text{wrong})}_{\M\M}} & \le \expect{H^{(\text{wrong})}_{\M}}.
    \end{align*}

    Let $v$ be a node where $\algo_1$ outputs $\U$ on the $i$th port for some $i$.
    Conditioned on the $(T-1)$-view of $v$, algorithm $\algo_0$ outputs $\M$ on the $i$th port of $v$ with probability $x_i(v)$.
    Recall that \(X_i(v)\) is the indicator random variable that outputs \(1\) if \(\algo_0\) outputs \(\M\) on the \(i\)th port of \(v\), and \(0\) otherwise.
    By definition, the sum over all nodes \(v\) of all but the \(b\) largest values of $x_i(v)$ is $\sum_{v \in V} S_{\maxDeg,b}(v)$.
    Let \(X_i'(v)\) be the indicator random variable that outputs \(1\) if both \(\algo_0\) and \(\algo_1\) output \(\M\) on the \(i\)th port of \(v\), and \(0\) otherwise. 
    In other words, $X'_i(v)$ represents the logical \emph{AND} between $X_i(v)$ and the indicator random variable of the event that $\mathcal{A}_1$ outputs $\M$ on the $i$-th port of $v$.
    We recall that $\mathcal{A}_1$ outputs $\M$ on the $i$-th port of $v$ if and only if \(i\) is a preferred direction for $v$.
    Note that, by linearity of expectation, \(\expect{H^{(\text{wrong})}_{\M}} =  \sum_{v \in V} \expect{\sum_{i \in [\maxDeg]} X_i(v) - X_i'(v)}\).
    By the law of total expectation, we get that 
    \(\expect{\sum_{i \in [\maxDeg]} X_i(v) - X_i'(v)} = \expect{\expectView{T-1}{\sum_{i \in [\maxDeg]} X_i(v) - X_i'(v)}}\), where the randomness of \(\expectView{T-1}{}\) is given by the private random bits of the input graph \(G\) in the radius-\((T-1)\) view of \(v\) plus shared randomness.
    Conditioning on the radius-\((T-1)\) view of \(v\), \(\expectView{T-1}{X_i(v) - X_i'(v)} = 0\) if \(i\) is a preferred direction for \(v\), and \(\expectView{T-1}{X_i(v) - X_i'(v)} = x_i(v)\) otherwise.
    Hence, \(\expectView{T-1}{\sum_{i \in [\maxDeg]} X_i(v) - X_i'(v)} = S_{\maxDeg,b}(v)\) and, by linearity of expectation, \(\sum_{v \in V} \expect{\expectView{T-1}{\sum_{i \in [\maxDeg]}  X_i(v) - X_i'(v)}} = \expect{\sum_{v \in V} S_{\maxDeg,b}(v)}\).
    Therefore, \(\expect{H^{(\text{wrong})}_{\M}} = \expect{\sum_{v \in V} S_{\maxDeg,b}(v)}\).
    Moreover, by linearity of expectation, \cref{lemma:analysis:algorithm-A0-expected-MU}, and the law of total expectation, it holds that $\expect{\sum_{v \in V} S_{\maxDeg,b}(v)} \le 2000\sqrt{b} \cdot \expect{E^{(0)}_{\M\U}}$. We thus get $\expect{H^{(\text{wrong})}_{\M}}\le 2000\sqrt{b} \cdot \expect{E^{(0)}_{\M\U}}$.

    By assumption on \(\algo_0\), the badness of $\algo_0$ is at most $p_0$, and hence  \(\expect{E^{(0)}_{\M\M}} \ge b\frac{n}{2}(1-p_0)\).
    It follows that 
    \begin{align*}
        \expect{E^{(0)}_{\M\U}} & \le bn - 2\expect{E^{(0)}_{\M\M}} \\
        &\le bnp_0.
    \end{align*}
    Therefore, 
    \begin{align*}
        \expect{E^{(1)}_{\M\M}} & \ge \expect{E^{(0)}_{\M\M}} - \expect{E^{(\text{wrong})}_{\M\M}} \\
        & \ge \expect{E^{(0)}_{\M\M}} - \expect{H^{(\text{wrong})}_{\M}} \\
        & \ge b\frac{n}{2}(1 - p_0) - 2000b^{\frac{3}{2}}np_0 \\
        & = b\frac{n}{2}(1 - p_0 - 4000\sqrt{b}p_0)\\
        &= b\frac{n}{2}(1 - (p_0 + 4000\sqrt{b}p_0)).
    \end{align*}
\end{proof}

\subsection{Badness of any zero-round algorithm}\label{ssec:zero}
We now lower bound the badness of any zero-round algorithm for $b$-grabbing.
\begin{lemma}\label{lem:zero-bound}
    Let $b \le \maxDeg/2$.
    Any $0$-round algorithm for $b$-grabbing has expected badness lower bounded by $1/2$.
\end{lemma}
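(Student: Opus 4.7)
The plan is to reduce the claim to a clean per-edge probability calculation that exploits the random port permutations assumed in the preliminaries. Concretely, each node $v$ locally recomputes its ports as a uniformly random permutation $\pi_v$, and these permutations are mutually independent across nodes. In $0$ rounds, the set $P_v \subseteq [\maxDeg]$ of ports on which $v$ outputs $\M$ is a function only of $v$'s identifier, its private randomness, and the shared randomness; in particular, $P_v$ is \emph{independent} of $\pi_v$. Since any $b$-grabbing algorithm outputs $\M$ on exactly $b$ ports, $|P_v| = b$, and therefore, conditional on the shared randomness and on $v$'s private randomness, the set of neighbors that $v$ grabs is a uniformly random $b$-subset of $v$'s $\maxDeg$ neighbors.

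Next I would bound the probability that a fixed edge $\{u,v\}$ is matched. Conditioning on the shared randomness and on the private randomness of $u$ and $v$, the permutations $\pi_u$ and $\pi_v$ are still independent and uniform. Hence the events \enquote{$u$ grabs $v$} and \enquote{$v$ grabs $u$} are conditionally independent, each occurring with probability $b/\maxDeg$, so $\pr{\{u,v\} \text{ is matched}} = (b/\maxDeg)^2$. By linearity of expectation and $\maxDeg$-regularity (giving $|E| = n\maxDeg/2$), the expected quality is $\expect{q} = (n\maxDeg/2) \cdot (b/\maxDeg)^2 = nb^2/(2\maxDeg)$. Plugging into the definition of badness yields $\expect{p} = 1 - 2\expect{q}/(bn) = 1 - b/\maxDeg \ge 1/2$ whenever $b \le \maxDeg/2$, which is the claim.

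I do not foresee a serious technical obstacle. The one point that needs care is the role of the shared randomness: although it can correlate the \emph{port sets} $P_u$ and $P_v$ chosen by $u$ and $v$, it cannot correlate the random port permutations $\pi_u$ and $\pi_v$, which are independent by construction. That independence is precisely what makes the per-edge matching probability factor as $(b/\maxDeg)\cdot(b/\maxDeg)$, and this factorization is essentially the only nontrivial ingredient of the argument.
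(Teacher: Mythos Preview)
Your proposal is correct and takes essentially the same approach as the paper: both argue that the random port permutations make each endpoint grab a fixed edge with probability $b/\maxDeg$ independently, so every edge is matched with probability $(b/\maxDeg)^2$, and linearity of expectation together with $b \le \maxDeg/2$ yields expected badness $1 - b/\maxDeg \ge 1/2$. You are somewhat more explicit than the paper about why shared randomness does not spoil the factorization (namely, by conditioning on everything except the independent permutations $\pi_u,\pi_v$), but the computation and the conclusion are identical.
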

\begin{proof}
    Since the initial knowledge of all nodes is the same (recall that we consider a setting without IDs, since IDs can be randomly generated), any $0$-round algorithm can be seen as a function that maps the given randomness into a subset of size exactly $b$ of $\{1,\ldots,\maxDeg\}$. Since each node grabs its incident edges without coordination, an edge is matched with probability $(b/\maxDeg)^2$, and the expected number of matched edges (i.e., the expected quality of the solution) is at most $\maxDeg \frac{n}{2} (b/\maxDeg)^2 = b \frac{n}{2}(1 - \frac{\maxDeg-b}{\maxDeg})$. Thus, the expected badness is lower bounded by $\frac{\maxDeg-b}{\maxDeg} \ge 1/2$.
\end{proof}

\subsection{Recursion}\label{ssec:iteration}
By iteratively applying \Cref{lem:onestep} for $T$ times, we obtain the following result.
\begin{lemma}\label{lem:recursion}
    Let $T \le \frac{\varepsilon}{4} \log_\maxDeg n$.
    Let $\algo$ be a no-error $T$-round algorithm that solves $b$-grabbing with expected badness upper bounded by $p$. Then, there exists a no-error $0$-round algorithm $\algo'$ that solves  $b$-grabbing with expected badness upper bounded by $(c \cdot \sqrt{b})^T \cdot p$, for some universal constant $c > 0$.
\end{lemma}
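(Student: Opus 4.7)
The plan is a straightforward induction on $T$ using \Cref{lem:onestep} as the inductive step, with the main (mild) subtlety being the bookkeeping of the round-count hypothesis and the constant $c$ throughout the recursion.

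Concretely, I will induct on $T$. For the base case $T = 0$, the claim is vacuous: take $\algo' = \algo$, which already runs in $0$ rounds and has badness $p = (c\sqrt{b})^0 \cdot p$. For the inductive step, suppose the statement holds for all values strictly smaller than $T$, and let $\algo$ be a no-error $T$-round $b$-grabbing algorithm of expected badness at most $p$, with $T \le \tfrac{\varepsilon}{4}\log_\maxDeg n$. Since $T - 1 \le \tfrac{\varepsilon}{4}\log_\maxDeg n$ as well, \Cref{lem:onestep} applies to $\algo$ and yields a no-error $(T-1)$-round algorithm $\algo_{T-1}$ for $b$-grabbing whose expected badness is at most $c\sqrt{b} \cdot p$, where $c$ is the universal constant from \Cref{lem:onestep}. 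Applying the inductive hypothesis to $\algo_{T-1}$ (still under the same round bound, since $T-1 \le \tfrac{\varepsilon}{4}\log_\maxDeg n$) then produces a no-error $0$-round algorithm $\algo'$ with expected badness at most
\[
(c\sqrt{b})^{T-1} \cdot (c\sqrt{b} \cdot p) \;=\; (c\sqrt{b})^{T} \cdot p,
\]
which is exactly what we need.

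The key point making this clean is that \Cref{lem:onestep} is stated for any $T \le \tfrac{\varepsilon}{4}\log_\maxDeg n$ with the same universal constant $c$ independent of the current round budget and of $p$, so the factor picked up at each step is uniform and simply multiplies. This is also why it is safe to reuse the symbol $c$ throughout: at every application we invoke \Cref{lem:onestep} with the same constant.

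I do not foresee any genuine obstacle here; the only thing to be careful about is that the round-budget hypothesis $T \le \tfrac{\varepsilon}{4}\log_\maxDeg n$ must hold at every intermediate invocation, but this is automatic since the round count only decreases along the recursion. Thus the only technical content is the single-step speedup of \Cref{lem:onestep}, and \Cref{lem:recursion} is its immediate iteration.
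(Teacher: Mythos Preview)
Your proposal is correct and matches the paper's own argument, which simply states that the lemma follows ``by iteratively applying \Cref{lem:onestep} for $T$ times'' without further detail. Your induction spells this out carefully, including the observation that the round-budget hypothesis is preserved along the recursion and that the constant $c$ from \Cref{lem:onestep} is universal and hence uniform across all applications.
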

By combining \Cref{lem:zero-bound} with \Cref{lem:recursion}, we obtain the following.
\begin{lemma}\label{lem:bound-grabbing}
    Let $b \le \maxDeg/2$. Any no-error algorithm $\algo$ that solves $b$-grabbing  with expected badness upper bounded by $p$ requires at least $T = \min\{\frac{\varepsilon}{4} \log_\maxDeg n, \frac{\log (1/(2p))}{\log(c \sqrt{b}) }\}$.
\end{lemma}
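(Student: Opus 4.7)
The plan is to derive the lemma by a short contradiction argument built entirely on top of the two preceding results. I expect the argument to be essentially mechanical, because \Cref{lem:recursion} already compresses the iterative speedup into a single statement and \Cref{lem:zero-bound} already handles the base case.

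First, I would assume toward a contradiction that some no-error $b$-grabbing algorithm $\algo$ with expected badness at most $p$ runs in $T$ rounds with $T$ strictly smaller than both $\frac{\varepsilon}{4}\log_\maxDeg n$ and $\frac{\log(1/(2p))}{\log(c\sqrt{b})}$, where $c$ is the universal constant produced by \Cref{lem:onestep}. The first of the two bounds says that the hypothesis of \Cref{lem:recursion} is met, so applying that lemma to $\algo$ would yield a no-error $0$-round algorithm $\algo'$ for $b$-grabbing with expected badness at most $(c\sqrt{b})^T \cdot p$.

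Next, I would rewrite the second bound. Since $c > 1$ (as guaranteed by \Cref{lem:onestep}) and $b \in \natsPos$, we have $c\sqrt{b} > 1$, hence $\log(c\sqrt{b}) > 0$. Exponentiating the inequality $T < \frac{\log(1/(2p))}{\log(c\sqrt{b})}$ therefore preserves its direction and gives $(c\sqrt{b})^T < 1/(2p)$. Multiplying by $p$ shows that $\algo'$ has expected badness strictly less than $1/2$, contradicting \Cref{lem:zero-bound}, which guarantees that every $0$-round $b$-grabbing algorithm has expected badness at least $1/2$ whenever $b \le \maxDeg/2$.

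The only point that requires care is ensuring $c\sqrt{b} > 1$, so that the logarithm in the denominator is well-defined and positive and the exponential inequality preserves direction; this is immediate from the constants stated in \Cref{lem:onestep} together with $b \ge 1$. Beyond this sanity check, I do not anticipate any real obstacle: the substance of the bound is already packaged inside the two ingredients, and combining them amounts to a one-line change-of-base computation.
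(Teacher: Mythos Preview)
Your argument is correct and matches the paper's proof essentially line for line: assume the algorithm is fast enough that \Cref{lem:recursion} applies, obtain a $0$-round algorithm with expected badness $(c\sqrt{b})^T p$, and contradict \Cref{lem:zero-bound}. You are merely more explicit than the paper about the case split on $T$ versus $\frac{\varepsilon}{4}\log_\maxDeg n$ and about the positivity of $\log(c\sqrt{b})$.
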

\begin{proof}
    Proceeding by disjoint cases, we have two scenarios to consider. If $T > \frac{\varepsilon}{4} \log_\maxDeg n$, then, trivially, the claim follows. 
    Suppose now that $T \leq \frac{\varepsilon}{4} \log_\maxDeg n$.
    By \Cref{lem:recursion}, given a $T$-round algorithm $\algo$ that solves $b$-grabbing  with expected badness upper bounded by $p$, we can construct a $0$-round algorithm $\algo'$ with expected badness upper bounded by $(c \cdot \sqrt{b})^T \cdot p$. By \Cref{lem:zero-bound}, we must have $(c \cdot \sqrt{b})^T \cdot p \ge 1/2$.
    This implies that $T \geq \frac{\log (1/(2p))}{\log(c \sqrt{b}) }$.
    Hence, the claim follows for this case, too.
\end{proof}

We now prove a lower bound for maximal $b$-matching on $\mathcal{G}$.
\begin{lemma}\label{lem:lb-family}
    Any randomized \local algorithm $\algo$ for maximal $b$-matching with failure probability at most $b \frac{\log \maxDeg}{\maxDeg}$ on $\mathcal{G}$ requires $\Omega(\min\{\log_{1+b} \maxDeg, \log_\maxDeg n\})$, for any \(b \in o(\maxDeg / \log \maxDeg)\).
\end{lemma}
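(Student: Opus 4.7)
The plan is simply to chain together \Cref{cor:starting} and \Cref{lem:bound-grabbing}, and then verify that the resulting bound matches the claimed asymptotic form.

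Concretely, suppose $\algo$ is a $T$-round randomized \local algorithm for maximal $b$-matching on $\mathcal{G}$ whose failure probability is at most $b\log\maxDeg/\maxDeg$. I would first invoke \Cref{cor:starting} to obtain a no-error $T$-round algorithm $\algo'$ for $b$-grabbing on $\mathcal{G}$ with expected badness $p = \Theta(b\log\maxDeg/\maxDeg)$. The hypothesis $b \in o(\maxDeg/\log\maxDeg)$ guarantees $p < 1/2$ for $\maxDeg$ large enough (so the bound of \Cref{lem:bound-grabbing} is non-vacuous), and also $b \le \maxDeg/2$, so \Cref{lem:bound-grabbing} applies to $\algo'$ and yields
\[
    T \;\ge\; \min\left\{\frac{\varepsilon}{4}\log_\maxDeg n,\ \frac{\log(1/(2p))}{\log(c\sqrt{b})}\right\}.
\]

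It then remains to show that the second argument of the minimum is $\Omega(\log_{1+b}\maxDeg)$. Substituting the value of $p$, one has $\log(1/(2p)) = \log\maxDeg - \log b - \log\log\maxDeg - \Theta(1)$, while $\log(c\sqrt{b}) = \tfrac{1}{2}\log b + \log c = \Theta(\log(1+b))$. I would then split into two regimes. If $b \ge \sqrt{\maxDeg}$, then $\log_{1+b}\maxDeg \le 2$, so the claimed lower bound $\Omega(\log_{1+b}\maxDeg)$ is $O(1)$ and hence trivially satisfied (since any matching algorithm requires at least one round). Otherwise $\log b < \tfrac{1}{2}\log\maxDeg$, which gives $\log(1/(2p)) = \Omega(\log\maxDeg)$; the ratio above then becomes $\Omega(\log\maxDeg/\log(1+b)) = \Omega(\log_{1+b}\maxDeg)$, as desired. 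Combining the two cases proves the claim.

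I do not anticipate any real obstacle: the conceptual content of the lemma is already packaged into \Cref{cor:starting} and \Cref{lem:bound-grabbing}. The only mild subtlety is handling the degenerate regime where $b$ approaches $\maxDeg/\log\maxDeg$, in which $\log(1/(2p))$ ceases to be $\Omega(\log\maxDeg)$; the case split above resolves this by observing that $\log_{1+b}\maxDeg$ also collapses to $\Theta(1)$ in that range, so both sides of the inequality degenerate together.
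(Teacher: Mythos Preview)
Your proposal is correct and follows essentially the same route as the paper: invoke \Cref{cor:starting} to pass to a no-error $b$-grabbing algorithm with badness $p=\Theta(b\log\maxDeg/\maxDeg)$, then apply \Cref{lem:bound-grabbing}. The paper simply writes ``Hence, the claim follows'' after reaching the inequality $T \ge \min\{\tfrac{\varepsilon}{4}\log_\maxDeg n,\ \log(1/(2p))/\log(c\sqrt{b})\}$, whereas you spell out the asymptotic bookkeeping (including the case split at $b=\sqrt{\maxDeg}$) that turns the second term into $\Omega(\log_{1+b}\maxDeg)$; this extra detail is sound and arguably an improvement in rigor.
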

\begin{proof}
    By \Cref{cor:starting}, if there exists a $T$-round algorithm for maximal $b$-matching, for $T \le \frac{\varepsilon}{4} \log_\maxDeg n$, then there exists a no-error $T$-round algorithm $\algo'$ that solves $b$-grabbing with badness at most $p = \Theta(b \frac{\log \maxDeg}{\maxDeg})$ in any $G \in \mathcal{G}$.
    By \Cref{lem:bound-grabbing}, we obtain that $T \ge \min\{\frac{\varepsilon}{4} \log_\maxDeg n, \frac{\log (1/(2p))}{\log(c \sqrt{b}) }\}$.
    Hence, the claim follows.
\end{proof}

\subsection{Obtaining a lower bound for trees}\label{ssec:trees}
A \emph{locally checkable problem} is a problem for which the validity of a solution is defined via local constraints, or in other words, there exists a constant-time algorithm called \emph{verifier} satisfying the following:
\begin{itemize}
    \item If a given solution is correct, the verifier \emph{accepts} at every node.
    \item If a given solution is not correct, the verifier \emph{rejects} on at least one node.
\end{itemize}
Clearly, maximal $b$-matching is a locally checkable problem.
It is folklore that, for locally checkable problems, any lower bound for high-girth graphs implies a lower bound for trees as well (see, e.g., \cite{linial-1992-locality-in-distributed-graph-algorithms,balliu-brandt-etal-2020-classification-of-distributed,chang-kopelowitz-pettie-2016-an-exponential-separation,balliu-ghaffari-etal-2022-node-and-edge-averaged}). However, no formal proof for this generic statement has been given in previous works. Hence, for completeness, we now prove the following statement. 
Note that we will use $p$ to denote the failure probability of an algorithm and no longer for the badness.

\begin{lemma}\label{lem:lift-trees}
    Let $T(\maxDeg,n) = \min(f(\maxDeg), \frac{\varepsilon}{4} \log_\maxDeg n)$ for some function $f$ and some constant $\varepsilon > 0$.
    Let $\mathcal{G}$ be a family of $\maxDeg$-regular graphs, such that each $n$-node graph $G \in \mathcal{G}$ has girth at least $\varepsilon\log_\maxDeg n$.
    
    Let $\Pi$ be a locally checkable problem that, for any algorithm $\algo$ with failure probability at most $p$, for all $n$ large enough, there exists a graph $G \in \mathcal{G}$ of $n$ nodes where $\algo$ requires at least $T(\maxDeg,n)$, where $T(\maxDeg,n) < \frac{\varepsilon}{4} \log_\maxDeg n$.

    Then, $\Pi$, on $\maxDeg$-regular trees, for any $n$ large enough requires at least $\gamma \cdot T(\maxDeg,n)$ rounds, for any algorithm with failure probability at most $p' = p/n^\gamma$ for any constant $\gamma > 0$.
\end{lemma}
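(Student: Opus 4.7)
The plan is to prove the lemma by a standard simulation/indistinguishability argument: on a graph from $\mathcal{G}$ with sufficiently large girth, the radius-$(t+O(1))$ view of every node is tree-isomorphic, so any fast $\Delta$-regular tree algorithm can be executed on $G$ as a black box, producing a correct solution with high probability. Combined with the hypothesized lower bound on $\mathcal{G}$, this forces the tree algorithm to be slow. The locally checkable assumption enters in an essential way: because the verifier has constant radius, its accept/reject behavior at a node $v$ depends only on $v$'s constant-radius neighborhood, so per-node acceptance behavior on $G$ matches that on the tree exactly.

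More concretely, I would argue by contradiction. Assume a randomized algorithm $\algo$ for $\Pi$ on $\Delta$-regular trees of $n$ nodes that runs in $t < \gamma \cdot T(\maxDeg,n)$ rounds and fails with probability at most $p' = p/n^{\gamma}$. Using \Cref{lemma:prelim:lb-graph} (and the assumption that $T(\maxDeg, n) < \tfrac{\varepsilon}{4}\log_{\maxDeg} n$, which implies $T(\maxDeg,n)=f(\maxDeg)$), I would pick $G \in \mathcal{G}$ of an appropriate size $n'$, chosen so that: (i) the girth of $G$, at least $\varepsilon \log_{\maxDeg} n'$, strictly exceeds $2(t+c)$, where $c$ is the constant radius of the verifier of $\Pi$; (ii) a union bound over the nodes of $G$ absorbs the failure inflation, i.e.\ $n' \cdot p' \le p$; and (iii) $T(\maxDeg, n') \ge \gamma \cdot T(\maxDeg, n)$. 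I would then run $\algo$ on $G$, supplying it with the tree-size parameter $n'$. By (i), each radius-$(t+c)$ view in $G$ is isomorphic to a radius-$(t+c)$ view in some $\Delta$-regular tree, so $\algo$ produces identically distributed outputs at the corresponding nodes. In particular, for every $v \in V(G)$, the probability that the verifier rejects at $v$ is at most the global tree failure probability $p'$, since $\Pr[\text{verifier rejects at } v] \le \Pr[\text{verifier rejects somewhere}] \le p'$. A union bound over the $n'$ nodes and (ii) give overall failure probability at most $p$ on $G$, so by the hypothesis on $\mathcal{G}$ the runtime satisfies $t \ge T(\maxDeg, n')$, contradicting $t < \gamma\, T(\maxDeg,n)$ via (iii).

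The conceptual content, that high-girth graphs are locally indistinguishable from trees for short-range algorithms, is classical folklore; the obstacle I expect is purely bookkeeping. One has to pick $n'$ as a carefully chosen polynomial in $n$ that makes (i), (ii), and (iii) hold simultaneously: (ii) suggests $n' \approx n^{\gamma}$, and one must then check, using $T(\maxDeg,n) = f(\maxDeg)$ together with $T(\maxDeg,n') = \min\{f(\maxDeg), \tfrac{\varepsilon}{4}\log_{\maxDeg} n'\}$, that for $n$ large enough we still have $T(\maxDeg,n') \ge \gamma f(\maxDeg)$. A secondary issue is ensuring that a graph of the required size exists in $\mathcal{G}$, which \Cref{lemma:prelim:lb-graph} guarantees as soon as $n' \maxDeg$ is even and $\maxDeg \le n'$; this can be arranged by rounding $n^\gamma$ up to the next integer of the appropriate parity. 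Finally, since randomized algorithms in the shared-randomness \local model can internally generate the identifiers and port permutations assumed by the hypothesis, no additional care is needed to match the input distribution between $\algo$'s view on trees and its view on $G$.
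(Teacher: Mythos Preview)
Your approach is essentially the paper's indistinguishability argument, and the skeleton (pick $G\in\mathcal{G}$ of a suitable size, bound the per-node rejection probability via a tree embedding, union bound) is correct. There is, however, one genuine slip: you run $\algo$ on $G$ while \emph{supplying it with parameter $n'$}, yet in the next line you invoke the failure bound $p' = p/n^\gamma$, which is your guarantee for $\algo$ on trees of size $n$. These do not match. The size parameter is part of every node's input and hence part of its view; if $v$ is told ``size $n'$'' then its radius-$(t+c)$ view embeds only in an $n'$-node tree, and your contradiction hypothesis says nothing about $\algo$'s failure probability there. The fix is the standard lie-about-$n$ trick: run $\algo$ on $G$ supplying it with parameter $n$ rather than the true size $n'$. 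Then each view embeds in an $n$-node $\Delta$-regular tree (the view has at most $\Delta^{t+c}\ll n$ vertices), the per-node bound $p'$ applies as written, and your conditions (ii) and (iii) with $n'\approx n^\gamma$ go through verbatim.

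For comparison, the paper organizes the same argument in two stages: first it proves the case $\gamma=1$ (where tree and $G$ have the same size $n$, so no lying is needed), and then obtains general $\gamma<1$ by an explicit lie-about-$n$ reparameterization (run $\algo$ with input $n^{1/\gamma}$ on an $n$-node instance). Your one-shot version with $n'\approx n^\gamma$ is equivalent once the parameter is corrected. A secondary stylistic difference is that the paper argues the key step in the contrapositive direction---it runs $\algo$ on $G$, finds one node $v$ where the verifier rejects with probability $>p/n$, and extracts a single bad tree around $v$---whereas you union-bound over all nodes of $G$; both directions are valid.
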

\begin{proof}
    We first prove that $\Pi$, on $\maxDeg$-regular trees, for all $n$ large enough, requires $T(\maxDeg,n)$ rounds, for any algorithm with failure probability at most $p' = p/n$. 
    By contradiction, suppose that an algorithm $\algo$ for trees contradicting the stated lower bound exists.
    Hence, there are infinitely many values of $n$ and $\maxDeg$-regular trees of $n$ nodes, where $\algo$ terminates in strictly less than $T(\maxDeg,n)$ rounds and fails with probability at most $p'$. Let $N$ be the set of such values of $n$.
    
    By assumption, there must exist a graph $G$ of $n$ nodes in $\mathcal{G}$ such that $n \in N$ and, if we run $\algo$ on $G$, it must fail with probability strictly larger than $p$. This implies that there exists at least one node $v \in V(G)$ at which the verifier of $\Pi$ rejects with probability strictly larger than $p' = p / n$. Take the radius-$T(\maxDeg,n)$ neighborhood $\mathcal{N}$ of $v$. By the girth assumption, $\mathcal{N}$ is a tree of at most $n$ nodes, and hence it can be extended into a tree $\mathcal{T}$ of exactly $n$ nodes such that the radius-$T(\maxDeg,n)$ neighborhood of $v$ in $\mathcal{T}$ and in $G$ are the same. Hence, $\algo$ must fail on $\mathcal{T}$ with probability strictly larger than $p' = p / n$, a contradiction.

    We now use a standard \emph{lie-about-n} argument to replace $n$ with $n^\gamma$.
    We prove that $\Pi$, on $\maxDeg$-regular trees, for all $n$ large enough, requires at least $\gamma \cdot T(\maxDeg,n)$ rounds, for any algorithm with failure probability at most $p/n^\gamma$ for any $\gamma > 0$. By contradiction, suppose that a value $\gamma < 1$ and an algorithm $\algo$ for trees contradicting the stated lower bound exists.
    Recall that $\algo$ takes as input the value of $n$. If we run $\algo$ on a graph of $n$ nodes by giving as input $n' = n^{1/\gamma}$, we would obtain the following:
    \begin{itemize}
        \item The runtime of the algorithm becomes strictly less than $\gamma \cdot T(\maxDeg, n^{1/\gamma}) \le \gamma \frac{\varepsilon}{4} \log_\maxDeg (n^{1/\gamma}) = \frac{\varepsilon}{4} \log_\maxDeg n$. Hence, since any node running $\algo$ does not see the whole graph, the algorithm cannot detect that the given $n$ is not the correct one, and hence it must work correctly.
        \item The runtime of the algorithm is strictly less than $\gamma \cdot T(\maxDeg, n^{1/\gamma}) = \gamma \cdot \min(f(\maxDeg), \frac{\varepsilon}{4} \log_\maxDeg n^{1/\gamma}) \le T(\maxDeg,n)$.
        \item The failure probability becomes $p / (n')^{\gamma} = p/n$.
    \end{itemize}
    Hence, we reach a contradiction with the case in which $\gamma=1$.
\end{proof}
By combining \Cref{lem:lb-family} and \Cref{lem:lift-trees}, and by taking $\maxDeg = 2^{\sqrt{\log n \log (1+b)}}$, we obtain the following theorem.
\begin{theorem}\label{theorem:analysis:lower-bound-relaxed-b-matching}  
    Any randomized \local algorithm $\algo$ that solves maximal $b$-matching on \(\maxDeg\)-regular trees with failure probability at most $b \frac{\log \maxDeg}{\maxDeg} / n^\gamma$ requires $\Omega(\min\{\log_{1+b} \maxDeg, \log_\maxDeg n\})$ and $\Omega(\sqrt{\log_{1+b} n})$ rounds, for any constant $\gamma > 0$ and for any \(b \in o(\maxDeg / \log \maxDeg)\).
\end{theorem}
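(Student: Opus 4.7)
The plan is to derive the theorem by combining \Cref{lem:lb-family}, which gives the lower bound on the high-girth family $\mathcal{G}$, with \Cref{lem:lift-trees}, which lifts any such lower bound from $\mathcal{G}$ to $\maxDeg$-regular trees at the cost of a $1/n^\gamma$ factor in the allowed failure probability. The second quantitative bound $\Omega(\sqrt{\log_{1+b} n})$ will then follow from the first by an optimal choice of $\maxDeg$.

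First I would check that the hypotheses of \Cref{lem:lift-trees} are met. Maximal $b$-matching is locally checkable: a constant-round verifier can, at each node, confirm both the $b$-matching constraint and the maximality condition by inspecting the immediate neighborhood. The family $\mathcal{G}$ from \Cref{lemma:prelim:lb-graph} consists of $\maxDeg$-regular graphs of girth at least $\varepsilon \log_\maxDeg n$, and the lower bound from \Cref{lem:lb-family} has exactly the form $T(\maxDeg, n) = \min(f(\maxDeg), \frac{\varepsilon}{4} \log_\maxDeg n)$ required by \Cref{lem:lift-trees}, with $f(\maxDeg) = \Theta(\log_{1+b} \maxDeg)$. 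Plugging this into \Cref{lem:lift-trees} yields, on $\maxDeg$-regular trees, a lower bound of $\Omega(\min\{\log_{1+b} \maxDeg, \log_\maxDeg n\})$ under failure probability at most $(b \log \maxDeg / \maxDeg)/n^\gamma$ for any constant $\gamma > 0$. This establishes the first bound stated in the theorem.

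Next, I would derive the $\Omega(\sqrt{\log_{1+b} n})$ bound by specializing $\maxDeg$. Using the identity $\log_{1+b} \maxDeg \cdot \log_\maxDeg n = \log_{1+b} n$, the product of the two terms inside the $\min$ is fixed at $\log_{1+b} n$, and by AM-GM the minimum is maximized precisely when the two terms are equal. Setting them equal gives $\log \maxDeg = \sqrt{\log n \log(1+b)}$, i.e., $\maxDeg = 2^{\sqrt{\log n \log(1+b)}}$, at which point both terms become $\sqrt{\log_{1+b} n}$. Hence for this specific choice of $\maxDeg$ the first bound reads $\Omega(\sqrt{\log_{1+b} n})$, yielding the second claim.

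The main thing to double-check is that the specialized value $\maxDeg = 2^{\sqrt{\log n \log(1+b)}}$ remains within the valid range of all preceding lemmas; in particular, that the hypothesis $b \in o(\maxDeg/\log \maxDeg)$ of \Cref{lem:lb-family} continues to hold. For this choice, the requirement unfolds to $b \cdot \sqrt{\log n \log(1+b)} = o\bigl(2^{\sqrt{\log n \log(1+b)}}\bigr)$, which is easily satisfied in the regime where $b$ grows moderately compared to $n$ (e.g., any $b$ sub-polynomial in $n$). Beyond this bookkeeping, the proof is essentially a direct combination of the two lemmas with the short algebraic calculation above; no new argument is needed.
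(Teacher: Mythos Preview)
Your proposal is correct and follows essentially the same approach as the paper, which simply states that the theorem follows by combining \Cref{lem:lb-family} with \Cref{lem:lift-trees} and by taking $\maxDeg = 2^{\sqrt{\log n \log(1+b)}}$. Your version is in fact more detailed, explicitly verifying the hypotheses of \Cref{lem:lift-trees} and justifying the optimal choice of $\maxDeg$ via the product identity $\log_{1+b}\maxDeg \cdot \log_\maxDeg n = \log_{1+b} n$.
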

Observe that the standard requirement for the failure probability of a randomized algorithm is to be upper bounded by $1/n$. \Cref{theorem:analysis:lower-bound-relaxed-b-matching} implies a lower bound for such algorithms by picking a suitable value for $\gamma$.

\section{\texorpdfstring{Lower bound for \((\maxDeg + k)\)-edge coloring}{Lower bound for (Delta+k)-edge coloring}}\label{sec:edge-coloring-lb}
In this section, we show a lower bound for the \((\maxDeg+k)\)-edge coloring problem.
The problem requires assigning a color to each edge, from a palette of $\maxDeg + k$ colors, such that edges incident to the same node are assigned pairwise distinct colors.

\begin{theorem}\label{theorem:analysis:edge-coloring-lower-bound}
    Let $\delta \in (0,1]$ be any constant.
For any $k \le \maxDeg^{1-\delta}$, any randomized \local algorithm $\algo$ that solves $(\maxDeg+k)$-edge coloring on trees with failure probability at most $\frac{1}{\maxDeg^\delta} / n^\gamma$ requires $\Omega(\min\{\log \maxDeg, \log_\maxDeg n\})$ and $\Omega(\sqrt{\log n})$ rounds, for any constant $\gamma > 0$.
\end{theorem}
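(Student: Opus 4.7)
The plan is to reduce $(\maxDeg+k)$-edge coloring on the high-girth graph family $\mathcal{G}$ to $1$-grabbing, and then reuse the full round-elimination machinery of \Cref{sec:matching-lb} with $b=1$. The key observation is that a valid $(\maxDeg+k)$-edge coloring partitions the edge set into $\maxDeg+k$ matchings; if we sample one color class uniformly at random, each edge lands in that class with probability $1/(\maxDeg+k)$, yielding a matching that, whenever $k$ is small relative to $\maxDeg$, is essentially a near-perfect $1$-grabbing. This parallels \Cref{lemma:analysis:matching-certified-reduction}, but the analysis is even simpler because no independence-number bound is required.

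Concretely, given a $T$-round algorithm $\algo$ for $(\maxDeg+k)$-edge coloring with failure probability at most $q := \maxDeg^{-\delta}/n^\gamma$, I would build a no-error $T$-round $1$-grabbing algorithm $\algo'$ as follows. Using the shared randomness, all nodes sample a common color $c \in \{1,\ldots,\maxDeg+k\}$ uniformly at random. Each node $v$ then runs $\algo$ for $T$ rounds and inspects the colors it outputs on its incident half-edges. If exactly one incident half-edge is labelled $c$, then $v$ outputs $\M$ on that half-edge; otherwise (no incident half-edge is labelled $c$, or more than one is, which can only happen when $\algo$ fails), $v$ picks an arbitrary incident half-edge and outputs $\M$ on it. All remaining half-edges are labelled $\U$. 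By construction, every node grabs exactly one half-edge, so $\algo'$ is always a valid $1$-grabbing algorithm.

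To bound the expected badness I would condition on the success of $\algo$. When $\algo$ succeeds, every edge $e$ has a well-defined color, and since the sampled $c$ is uniform and independent of $\algo$'s execution, $e$ is matched by $\algo'$ with probability exactly $1/(\maxDeg+k)$. By linearity, the expected number of matched edges conditional on success is at least $n\maxDeg/(2(\maxDeg+k))$, so the conditional expected badness is at most $k/(\maxDeg+k) \le \maxDeg^{-\delta}$. Conditional on failure the badness is trivially at most $1$. Combining the two cases, $\algo'$ has expected badness at most $\maxDeg^{-\delta}+q = O(\maxDeg^{-\delta})$. Invoking \Cref{lem:bound-grabbing} with $b=1$ then gives $T=\Omega\bigl(\min\{\log_\maxDeg n,\log(\maxDeg^\delta)\}\bigr)=\Omega(\min\{\log_\maxDeg n,\log \maxDeg\})$ on $\mathcal{G}$. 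Since $(\maxDeg+k)$-edge coloring is locally checkable, \Cref{lem:lift-trees} transfers this lower bound to $\maxDeg$-regular trees, with the usual polynomial blow-up in the failure probability absorbed by the factor $n^\gamma$. The $\Omega(\sqrt{\log n})$ bound is obtained by choosing $\maxDeg=2^{\sqrt{\log n}}$, which balances the two terms in the minimum.

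I do not expect any serious obstacle: the round-elimination half of the argument has already been done for general $b$ in \Cref{sec:matching-lb}, and the only new ingredient is the one-shot reduction above, which is essentially a pigeonhole argument on color classes together with a shared-randomness move. The main points to double-check are (i) that the choice $\maxDeg=2^{\sqrt{\log n}}$ respects the constraint $T\le \tfrac{\varepsilon}{4}\log_\maxDeg n$ used inside \Cref{lem:onestep}, and (ii) that edge coloring admits a constant-time half-edge verifier so that \Cref{lem:lift-trees} applies verbatim; both are routine.
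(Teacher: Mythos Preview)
Your proposal is correct and follows essentially the same approach as the paper: sample a uniform color with shared randomness, turn the resulting near-perfect matching into a $1$-grabbing with badness $O(\maxDeg^{-\delta})$, apply \Cref{lem:bound-grabbing} on $\mathcal{G}$, then \Cref{lem:lift-trees}, and finally set $\maxDeg=2^{\sqrt{\log n}}$. The only cosmetic point is that the reduction should first be stated on $\mathcal{G}$ with failure probability $\maxDeg^{-\delta}$, letting \Cref{lem:lift-trees} produce the extra $n^{-\gamma}$ factor for trees, rather than starting with $q=\maxDeg^{-\delta}/n^\gamma$.
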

\begin{proof}
    Let $T$ be the runtime of an algorithm solving $(\maxDeg+k)$-edge coloring.
    We prove our lower bound by showing that a $(\maxDeg+k)$-edge coloring can be converted into a solution for $1$-grabbing with small badness in $0$ rounds.
    Assume we are given a solution for $(\maxDeg+k)$-edge coloring. Sample a color class uniformly at random from $\{1,\ldots,\maxDeg+k\}$ using shared randomness. The expected number of edges of this color is at least $\frac{\maxDeg n/2}{\maxDeg+k} = \frac{n}{2}(1 - \frac{k}{\maxDeg + k})$. Hence, we can solve $1$-grabbing with expected badness upper bounded by $\frac{k}{\maxDeg+k} \le 1 / \maxDeg^{\delta}$ in $T$ rounds. 
    By \Cref{lem:bound-grabbing}, we obtain that $T \ge \min\{\frac{\varepsilon}{4} \log_\maxDeg n, \frac{\log (1/(2p))}{\log(c) }\}$, where $p = 1 / \maxDeg^{\delta}$, and $c>1$ is the constant given by \Cref{lem:onestep}, hence obtaining the claimed lower bound as a function of $n$ and $\maxDeg$ for the family $\mathcal{G}$ of graphs for algorithms with failure probability at most $p$.
    By applying \Cref{lem:lift-trees}, the stated lower bound holds for trees. The lower bound as a function of $n$ is obtained by taking $\maxDeg = 2^{\sqrt{\log n}}$.
\end{proof}

\section*{Acknowledgments}

We would like to thank Maxime Flin for spotting a wrong constant in a proof.

\noindent
Alkida Balliu was supported by FIS-2024-03606 - A Robust Theory of Distributed Computation CUP: D53C25002470001 - Decreto Direttoriale del 21 Novembre 2024, N. 1802 - Procedura competitiva per lo sviluppo delle attività di ricerca fondamentale, a valere sul fondo italiano per la scienza 2024-2025 - (bando FIS 3) - Decreto ammissione al finanziamento prot. MUR 0018010 del 12-11-2025.

\noindent
Francesco d'Amore was supported by Decreto MUR n. 47/2025, CUP D13C25000750001.
This work was partially supported by the MUR (Italy) Department of Excellence 2023--2027 for GSSI.

\printbibliography

\appendix

\section{Probabilistic inequalities}\label{app:probabilistic-inequalities}

\begin{lemma}[Simplified Khintchine inequality \cite{haagerup1981}]\label{lemma:app:khintchine}
    Let \(\varepsilon_1, \ldots, \varepsilon_n \sim \rademacher(1/2)\) be i.i.d.\ random variables. 
    Let \(x_1, \ldots, x_n \in \reals\).
    It holds that
    \begin{align*}
        \frac{1}{\sqrt{2}}\left(\sum_{i=1}^n \abs{x_i}^2\right)^{\frac{1}{2}} \le \expect{\abs{\sum_{i=1}^n x_i\varepsilon_i}}
        \le \left(\sum_{i=1}^n \abs{x_i}^2\right)^{\frac{1}{2}}.
    \end{align*} 
\end{lemma}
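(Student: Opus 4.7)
The right-hand inequality is an immediate consequence of Jensen's inequality. Writing $S := \sum_{i=1}^n x_i \varepsilon_i$, independence and $\expect{\varepsilon_i}=0$, $\expect{\varepsilon_i^2}=1$ give $\expect{S^2} = \sum_i x_i^2$. Applying Jensen to the concave function $\sqrt{\cdot}$ (equivalently, Cauchy--Schwarz against the constant $1$) yields $\expect{\abs{S}} = \expect{\sqrt{S^2}} \le (\expect{S^2})^{1/2} = (\sum_i x_i^2)^{1/2}$, which is the claimed upper bound.

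\textbf{Lower bound with a weaker constant.} To warm up, note that a Paley--Zygmund-style moment comparison quickly yields the constant $1/\sqrt{3}$. Using H\"older with exponents $(3/2,3)$ one gets $\expect{S^2} \le \expect{\abs{S}}^{2/3}\expect{S^4}^{1/3}$. A direct expansion using independence and $\varepsilon_i^2 = 1$ gives $\expect{S^4} = 3(\sum_i x_i^2)^2 - 2\sum_i x_i^4 \le 3(\sum_i x_i^2)^2$, so $\expect{\abs{S}} \ge (\sum_i x_i^2)^{1/2}/\sqrt{3}$. This route is short but falls short of the sharp constant $1/\sqrt{2}$ claimed in the lemma.

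\textbf{Sharp constant via Haagerup's representation.} To upgrade to $1/\sqrt{2}$, I would follow Haagerup's approach based on the Fourier-type identity $\abs{t} = (2/\pi)\int_0^\infty u^{-2}(1-\cos(ut))\,du$, valid for every $t \in \reals$. Substituting $t = S$, interchanging expectation and integral (justified by non-negativity of the integrand), and using independence to factor $\expect{\cos(uS)} = \prod_{i=1}^n \cos(u x_i)$, one obtains the closed-form representation $\expect{\abs{S}} = (2/\pi)\int_0^\infty u^{-2}\bigl(1 - \prod_{i=1}^n \cos(u x_i)\bigr)\,du$. By homogeneity we may normalize $\sum_i x_i^2 = 1$, and the goal becomes a tight lower bound on this integral uniformly in $n$ and in unit-norm coefficient vectors.

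The main obstacle, and the genuine content of Haagerup's theorem, is the extremal analysis: showing that the infimum of the integral over all unit-norm vectors, across all dimensions $n$, is attained at the two-dimensional balanced vector $x_1 = x_2 = 1/\sqrt{2}$, which gives the matching lower value $\expect{\abs{(\varepsilon_1+\varepsilon_2)/\sqrt{2}}} = 1/\sqrt{2}$ by a direct four-outcome computation. I would handle this via the standard two-step reduction used by Haagerup: first, a convexity/symmetrization argument reduces the problem to vectors supported on boundedly many coordinates of equal magnitude; then a careful pointwise estimate on the product $\prod_i \cos(u x_i)$, performed by grouping coefficients and using the inequality $\log \cos(x) \le -x^2/2$ on a suitable range of $u$, pins down the extremal configuration. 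The probabilistic/Rademacher structure is exhausted at the factorization step; everything beyond is a real-analytic exercise on the resulting one-dimensional integral.
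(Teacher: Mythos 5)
The paper itself does not prove this lemma; it imports it verbatim via the citation to Haagerup (1981) (the sharp $L^1$ constant is actually due to Szarek, 1976; Haagerup handled general $L^p$). So there is no in-paper argument to compare against, and your proposal has to be judged on its own terms.

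Your upper bound via Jensen, and your $1/\sqrt{3}$ lower bound via H\"older together with the exact fourth-moment identity $\expect{S^4} = 3\bigl(\sum_i x_i^2\bigr)^2 - 2\sum_i x_i^4$, are both complete and correct. The sharp-constant section, however, is a roadmap rather than a proof: the Fourier identity $\abs{t} = (2/\pi)\int_0^\infty u^{-2}(1-\cos(ut))\,du$, the factorization $\expect{\cos(uS)} = \prod_i \cos(u x_i)$, and the identification of the extremizer $x_1 = x_2 = 1/\sqrt{2}$ are all correctly stated, but the extremal analysis itself --- showing that this two-point balanced vector minimizes the integral over all unit-norm vectors in every dimension --- is the entire weight of the theorem, and you explicitly defer it (``I would handle this via \ldots''). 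Taken as a proof of the lemma as stated, that deferred step is a genuine gap. Worth noting, though: the gap is harmless for this paper's purposes. The lemma is only invoked, in the proof of the deviation lemma, with $0/1$ coefficients to obtain $\expect{\abs{\sum_{i \in [m]} \varepsilon_i}} \ge \sqrt{m/2}$, and any lower bound of the form $c\sqrt{m}$ with an absolute constant $c > 0$ would do after slackening the factor $1000$ that already appears there. So your elementary $1/\sqrt{3}$ argument is fully sufficient for everything the paper uses; the sharp constant $1/\sqrt{2}$ is a luxury here, not a load-bearing ingredient.
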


\begin{lemma}[Paley-Zygmund inequality]\label{lemma:app:paley-zygmund}
    Let \(Z \ge 0\) be any random variable with finite variance.
    For any \(0 \le \lambda \le 1\), it holds that 
    \begin{align*}
        \Pr[Z \ge \lambda \expect{Z}] \ge (1 - \lambda)^2 \frac{\expect{Z}^2}{\expect{Z^2}}.
    \end{align*}
\end{lemma}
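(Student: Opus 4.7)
The plan is to give the textbook split-and-Cauchy-Schwarz derivation. First I would handle the degenerate case $\expect{Z} = 0$ (and also $\lambda = 0$), where the right-hand side is either $0$ or a $0/0$ expression that is conventionally defined to be $0$, and the inequality is trivial; I would also observe that the case $\expect{Z^2} = \infty$ is excluded by the finite-variance hypothesis plus non-negativity, so in the remaining interesting case both $\expect{Z}$ and $\expect{Z^2}$ are finite and positive.

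The main step is the decomposition of $Z$ according to whether it falls below or above the threshold $\lambda \expect{Z}$. Write
\begin{align*}
    Z = Z \cdot \mathds{1}_{\{Z < \lambda \expect{Z}\}} + Z \cdot \mathds{1}_{\{Z \ge \lambda \expect{Z}\}}.
\end{align*}
Taking expectations and using $Z \ge 0$ together with the bound $Z \cdot \mathds{1}_{\{Z < \lambda \expect{Z}\}} \le \lambda \expect{Z}$ (which holds pointwise), I get
\begin{align*}
    \expect{Z} \le \lambda \expect{Z} + \expect{Z \cdot \mathds{1}_{\{Z \ge \lambda \expect{Z}\}}}.
\end{align*}
Rearranging yields the key lower bound $(1 - \lambda) \expect{Z} \le \expect{Z \cdot \mathds{1}_{\{Z \ge \lambda \expect{Z}\}}}$.

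Next I would apply the Cauchy-Schwarz inequality to the right-hand side, viewing $Z$ and $\mathds{1}_{\{Z \ge \lambda \expect{Z}\}}$ as $L^2$ random variables (legitimate because $Z$ has finite second moment and indicators are bounded), obtaining
\begin{align*}
    \expect{Z \cdot \mathds{1}_{\{Z \ge \lambda \expect{Z}\}}} \le \sqrt{\expect{Z^2}} \cdot \sqrt{\pr{Z \ge \lambda \expect{Z}}},
\end{align*}
where I used $\expect{\mathds{1}_{\{Z \ge \lambda \expect{Z}\}}^2} = \expect{\mathds{1}_{\{Z \ge \lambda \expect{Z}\}}} = \pr{Z \ge \lambda \expect{Z}}$. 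Combining the two displayed inequalities gives $(1-\lambda)\expect{Z} \le \sqrt{\expect{Z^2} \cdot \pr{Z \ge \lambda \expect{Z}}}$, and squaring both sides (valid since both are non-negative) and dividing by $\expect{Z^2}$ produces the claimed bound.

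There is no real obstacle: the whole argument is two lines of algebra plus one application of Cauchy-Schwarz. The only subtlety worth mentioning is that the displayed conclusion is vacuous when $\expect{Z} = 0$ (in which case $Z = 0$ a.s.\ and the right-hand side should be read as $0$), so the proof should either treat this case separately at the start or interpret $\expect{Z}^2 / \expect{Z^2}$ as $0$ by convention when the denominator vanishes.
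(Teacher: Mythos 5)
Your proof is correct and is the standard textbook derivation of the Paley--Zygmund inequality. Note, however, that the paper does not actually prove this lemma: it is collected (together with the Khintchine inequality) in a short appendix of standard probabilistic facts and is stated without proof, to be used as a black box in the proof of \cref{lem:probability-deviation}. So there is nothing in the paper to compare against. Your argument---splitting $Z$ by the indicator of $\{Z \ge \lambda \expect{Z}\}$, bounding the sub-threshold piece pointwise by $\lambda\expect{Z}$, and applying Cauchy--Schwarz to the super-threshold piece---is exactly the canonical proof, and your handling of the degenerate cases $\expect{Z}=0$ and $\lambda=0$ is appropriate; the finite-variance hypothesis indeed guarantees $\expect{Z^2}<\infty$ so Cauchy--Schwarz applies.
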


\section{Proof of \texorpdfstring{\Cref{lem:probability-deviation}}{Probability Lemma}}\label{app:probability-deviation}
    Let \(S' = \sum_{i \in [\maxDeg]} y_i\). Observe that $S' = \expect{Y}$.
    Let $Z_i = Y_i - y_i$, and let $Z = \sum_{i \in [\maxDeg]} Z_i$.
    Note that $\expect{Z} = 0$, and that $\expect{\abs{Z}}$ counts how much $Y$ deviates from \(S'\) in expectation.
    Hence,
    \[
    \expect{|b - Y|} = \expect{|(Y - S') - (b - S') } = \expect{|Z - (b - S')|} \ge \expect{|Z| - |b-S'|},
    \]
    where the last inequality holds by the triangle inequality.
    Since $|b- S'| \le \sum_{i \in [\maxDeg]}\abs{x_i - y_i} < S_{\Delta,b}/(1000 \sqrt{b})$, in order to prove our claim, it is sufficient to prove that $\expect{|Z|} \ge S_{\Delta,b}/(100 \sqrt{b})$. 

    \paragraph{Taking two copies of $Z$.}
    To prove such an inequality, we use the following approach.
    We define the events $Z_i'$ so that they have the same probability as $Z_i$, but are independent of $Z_i$, and define \(Z' = \sum_{i \in [\maxDeg]} Z_i'\). That is, $Z'$ is an independent copy of $Z$.

    By the triangular inequality, it holds that 
    \[
        \expect{|Z - Z'|} \le \expect{|Z|} + \expect{|Z'|} = 2 \expect{|Z|}.
    \]
    Hence, $\expect{|Z|} \ge \expect{|Z -  Z'|}/2$. Observe that $W_i = Z_i - Z_i'$ is $1$ with probability $y_i(1 - y_i)$, $-1$ with probability $y_i(1- y_i)$, and $0$ otherwise.
    Thus, $W_i$ can be written as the product of the independent random variables $\varepsilon_i$ and $J_i$, where $\varepsilon_i \sim \rademacher(1/2)$ and
    $J_i \sim \bern(2y_i(1 - y_i))$ (where $\rademacher$ and $\bern$ are the Rademacher and the Bernoulli distributions, respectively).     Let $J = \sum_{i \in [\maxDeg]} J_i$.

    Observe that, for a set $[m]$ of integers, by Khintchine's inequality (\cref{lemma:app:khintchine}) it holds that 
    \(\expect{\abs{\sum_{i \in [m]} \varepsilon_i}} \ge \sqrt{m/2}\).
    Since for all $i$, $J_i$ is either $0$ or $1$, it follows that 
    \(\expect{\abs{\sum_{i \in [\maxDeg]} \varepsilon_i J_i} \st J} \ge \sqrt{\frac{J}{2}}\).

    Hence,
    \[
        \expect{|Z-Z'|} = \expect{\abs{\sum_{i \in [\maxDeg]} W_i}} = 
        \expect{\abs{\sum_{i \in [\maxDeg]} \varepsilon_i J_i }} = \expect{\expect{\abs{\sum_{i \in [\maxDeg]} \varepsilon_i J_i } \st J}} \ge \expect{\sqrt{\frac{J}{2}}},
    \]
    where the last equality follows by the law of total expectation.
    Hence, $\expect{|Z|} \ge \expect{\sqrt{J}} / (2\sqrt{2})$, and in the following we provide a lower bound for $\expect{\sqrt{J}}$.

    Let $\Lambda =  \sum_{i = 1}^{\maxDeg} 2y_i(1-y_i)$. Note that $\expect{J} = \Lambda$. 
    We have two cases: \(\Lambda \le 1\) or \(\Lambda > 1\).

    \paragraph{\boldmath The case $\Lambda \le 1$.}

    Suppose \(\Lambda \le 1\).
    Then, 
    \[
    \expect{\sqrt{J}} \ge \pr{J \ge 1} 
         = 1 - \prod_{i = 1}^\maxDeg (1 - 2y_i(1-y_i)) 
         \ge 1 - e^{-\sum_{i = 1}^\maxDeg 2y_i(1-y_i)} 
         = 1 - e^{-\Lambda} 
         \ge \frac{\Lambda}{2},
    \]
    where in the last inequality we used that \(1 - e^{-x} \ge x/2\) for any \(x \in [0,1]\) and the fact that $\Lambda \le 1$.

    \paragraph{\boldmath The case $\Lambda > 1$.}
    If $\Lambda > 1$, first observe that 
    \[
        \expect{J^2} - \expect{J}^2  = \variance{J} 
         = \sum_{i = 1}^\maxDeg \variance{J_i} 
         = \sum_{i = 1}^\maxDeg 2y_i(1-y_i)(1 - 2y_i(1-y_i)) 
         \le \sum_{i = 1}^\maxDeg 2y_i(1-y_i) 
         = \expect{J} 
         = \Lambda,
    \]
    where the second equality holds by independence. Thus, \(\expect{J^2} \le \Lambda + \Lambda^2\).
    Hence, by the Paley-Zygmund's inequality (\Cref{lemma:app:paley-zygmund}) we obtain that 
    \[
    \pr{J \ge \Lambda / 2}  \ge \frac{\Lambda}{4(1+\Lambda)} 
         \ge \frac{1}{8}.
    \]
    Finally, it holds that
    \[
        \expect{\sqrt{J}} \ge \sqrt{\frac{\Lambda}{2}} \pr{J \ge \frac{\Lambda}{2}} \ge \frac{1}{8}\sqrt{\frac{\Lambda}{2}}. 
    \]

    \paragraph{Combining the two cases.}
    By combining the two cases, we obtain 
    \[
        \expect{|Z|} \ge \expect{\sqrt{J}} / (2\sqrt{2}) \ge \frac{1}{32}\min\{\Lambda,\sqrt{\Lambda}\}. 
    \]
    Since $\Lambda = \sum_{i = 1}^\maxDeg 2y_i(1-y_i) \le \sum_{i = 1}^\maxDeg \min\{y_i, 1-y_i\}$, we get that \[
        \expect{|Z|} \ge \frac{1}{32}\min \left\{\sum_{i = 1}^\maxDeg \min\{y_i, 1 - y_i\},\sqrt{\sum_{i = 1}^\maxDeg \min\{y_i, 1 - y_i\}}\right\}.
    \]

    \paragraph{\boldmath Bound on $\sum_{i \in [\maxDeg]} \min\{x_i,1-x_i\}$.}
    Before proving a bound on $\sum_{i \in [\maxDeg]} \min\{y_i,1-y_i\}$, we first prove the following:
    \[
    \sum_{i \in [\maxDeg]} \min\{x_i,1-x_i\} \ge S_{\Delta,b} \overset{\text{def}}{=} \sum_{b+1 \le i \le \Delta} x_i.
    \]
    W.l.o.g., assume $x_1 \ge x_2 \ge ... \ge x_\Delta$.
    Let \(i^\star\) be the first index such that \(x_i < 1 - x_i\), and $i^\star = \Delta+1$ if no such index exists. Observe that for all \(i < i^\star\) it holds that \(x_i \ge 1/2\).   
    Since $\sum x_i = b$, it follows that \(i^\star \le 2b + 1\).
     If \(i^\star \le b+1\), then the inequality trivially holds, as \(x_i < 1 - x_i\) for all \(i > b\).
    Otherwise, if $i^\star > b + 1$,
    \begin{align*}
       \sum_{i = 1}^\maxDeg \min\{x_i, 1 - x_i\} & = \sum_{i = 1}^{b} (1-x_i) + \sum_{i = b+1}^{i^\star - 1} (1-x_i) + \sum_{i = i^\star}^{\maxDeg} x_i. 
    \end{align*}    

    Hence,
    \begin{align}
        \sum_{i = 1}^\maxDeg \min\{x_i, 1 - x_i\} - S_{\maxDeg,b} & = \sum_{i = 1}^{b} (1 - x_i) + \sum_{i = b+1}^{i^\star - 1} (1 - 2x_i) \nonumber \\
        & = b - S_b + (i^\star - (b+1)) - 2\sum_{i = b+1}^{i^\star - 1} x_i \nonumber \\
        & = i^\star - 1 - S_b - 2 (S_{i^\star - 1} - S_b) \nonumber \\
        & = i^\star - 1 + S_b - 2S_{i^\star - 1} \nonumber \\
        & \ge i^\star - 1 + \frac{b^2 - 2b(i^\star - 1)}{i^\star -1} \label{eq:analysis:minsum}\\
        & = \frac{(i^\star - 1)^2 - 2(i^\star - 1)b + b^2}{i^\star - 1} \nonumber \\
        & = \frac{(i^\star - (b+1))^2}{i^\star - 1} \nonumber \\
        & \ge 0, \nonumber 
    \end{align}
    where \cref{eq:analysis:minsum} holds because \(S_b -2S_{i^\star - 1}\) is minimum when \(S_{i^\star - 1}\) is maximum and all \(x_{j}\) that occur in the sum \(S_{i^\star - 1}\) are equal, that is, \(S_{i^\star - 1} = b\) and \(x_{j}= b/(i^\star - 1)\).
    \paragraph{\boldmath Bound on $\sum_{i \in [\maxDeg]} \min\{y_i,1-y_i\}$.}
    By the assumption on the deviation between the $x_i$ and the $y_i$ values, we obtain that
    \[
        \sum_{i \in [\maxDeg]} \min\{y_i,1-y_i\} \ge \sum_{i \in [\maxDeg]} \min\{x_i,1-x_i\} - \frac{S_{\Delta,b}}{1000 \sqrt{b}} \ge S_{\Delta,b}  - \frac{S_{\Delta,b}}{1000 \sqrt{b}} \ge \frac{999}{1000} S_{\Delta,b}.
    \]

    \paragraph{\boldmath Bound on $\expect{|Z|}$.}
    We are now ready to bound $\expect{|Z|}$. We have that
    \[
        \expect{\abs{Z}}  \ge \frac{1}{32} \cdot \frac{999}{1000}  \min \left\{S_{\maxDeg,b}, \sqrt{S_{\maxDeg,b}}\right\} 
         \ge \frac{1}{64}  \min \left\{\frac{S_{\maxDeg,b}}{\sqrt{b}}, \sqrt{S_{\maxDeg,b}}\right\} 
         \ge \frac{S_{\maxDeg,b}}{64\sqrt{b}}
         \ge \frac{S_{\maxDeg,b}}{100\sqrt{b}}
         ,
    \]
    as required.

\section{Simpler proof for maximal matching}\label{app:simpler-proof-maximal-matching}
In this section, we provide a simple proof of \Cref{lem:probability-deviation} for the case in which $b = 1$.
Let $X_i$ be the event corresponding to $v$ outputting $\M$ towards its $i$th neighbor, let $Y_i$ be the event that the $i$th neighbor of $v$ outputs $\M$ towards $v$, and define \(x_i = \pr{X_i}\) and \(y_i = \pr{Y_i}\).
On a high level, if $x_i$ and $y_i$ differ too much, we already have high probability of obtaining $\M \U$ edges.
Hence, for simplicity, assume that, for all $i$, $x_i = y_i$. 
If the neighbors of $v$ output 2 $\M$s towards $v$, we are sure that we get an $\M \U$ edge, because $v$ outputs only one $\M$. Hence, it is sufficient to lower bound the probability of getting two $\M$s towards $v$.
W.l.o.g., assume $x_1 \ge \ldots \ge x_\Delta$.
Either $x_1$ is large, or it is small.
If $x_1$ is large, the probability of getting two $\M$s is at least $x_1 \cdot S_{\Delta,1}$, and we are done. If $x_1$ is small, then the $x_i$ values are well-distributed, and we have constant probability of getting at least two $\M$s.

\begin{lemma}
    Let \(x_1, \ldots, x_\Delta\) be non-negative real numbers. 
    Assume that $\sum{x_i} = 1$.
    Define $Y_1,\ldots,Y_\Delta$ to be independent events, where $Y_i$ has probability $y_i$, and let $M = 1 - \max\{x_1,\ldots,x_\Delta\}$.
    Let $Y$ be the number of events that occur among $Y_1,\ldots,Y_\Delta$.
    Assume that $\sum_{i \in [\maxDeg]}\abs{x_i - y_i} < M/1000$.
    Then, $\expect{|Y - 1|} \ge M / 1000$.
\end{lemma}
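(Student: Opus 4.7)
The plan is to lower-bound $\expect{|Y-1|}$ using the integer-valuedness of $Y$: since $|Y-1|\ge 1$ whenever $Y\ne 1$, one has $\expect{|Y-1|}\ge \Pr[Y=0]+\Pr[Y\ge 2]$. This structural observation is what lets the $b=1$ case avoid the Khintchine and Paley--Zygmund inequalities (and the independent-copy argument) used in the general-$b$ proof.

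First, some book-keeping. Let $i^\star$ be an index attaining $\max_i y_i$ and write $y^\star = y_{i^\star}$. From $\sum_j |x_j-y_j|<M/1000$ and $\sum_j x_j = 1$ one gets $\sum_j y_j \in [1-M/1000,\, 1+M/1000]$, and since each $x_i\le 1-M$ one also gets $y^\star \le (1-M)+M/1000$. Subtracting yields $\sum_{j\ne i^\star} y_j \ge M - 2M/1000 \ge \tfrac{998}{1000}M$. I would then split on the size of $y^\star$.

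In the first case $y^\star\ge 1/2$, I would lower-bound
\[
\Pr[Y\ge 2] \;\ge\; \Pr[Y_{i^\star}=1\text{ and some other }Y_j=1] \;=\; y^\star\Bigl(1-\prod_{j\ne i^\star}(1-y_j)\Bigr).
\]
Using $1-\prod_{j\ne i^\star}(1-y_j)\ge 1-e^{-\sum_{j\ne i^\star} y_j}$, the elementary inequality $1-e^{-t}\ge t/2$ for $t\in[0,1]$, and the lower bound $\sum_{j\ne i^\star} y_j \ge \tfrac{998}{1000}M$, this gives $\Pr[Y\ge 2]\ge \tfrac{1}{2}\cdot\tfrac{998}{2000}M$, which comfortably exceeds $M/1000$.

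In the second case $y^\star<1/2$, every $y_j<1/2$, so $\log(1-y_j)\ge -2y_j$ gives
\[
\Pr[Y=0] \;=\; \prod_j(1-y_j) \;\ge\; \exp\!\Bigl(-2\sum_j y_j\Bigr) \;\ge\; e^{-2(1+M/1000)},
\]
which is an absolute positive constant (around $0.13$) and certainly beats $M/1000$ since $M\le 1$. Combining the two cases gives $\expect{|Y-1|}\ge M/1000$. No serious obstacle is anticipated: the proof is a clean two-way case split, and the loose constants comfortably absorb the threshold. The main conceptual shortcut relative to the general lemma is recognizing that for $b=1$ it suffices to keep $Y$ away from the single integer~$1$, which $\Pr[Y=0]$ handles when no $y_i$ dominates and $\Pr[Y\ge 2]$ handles when some $y_i$ is at least $1/2$.
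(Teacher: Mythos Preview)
Your proof is correct. Both you and the paper start from $\expect{|Y-1|}\ge \Pr[Y\ne 1]$ and do a two-way case split on the size of the largest probability, but the details differ in two respects. First, the paper splits on $\max_i x_i$ at threshold $1/10$, whereas you split on $\max_i y_i$ at threshold $1/2$; either works once the deviation bound $\sum_i|x_i-y_i|<M/1000$ is in hand. Second, and more substantively, in the ``no dominant coordinate'' case the paper still lower-bounds $\Pr[Y\ge 2]$ by finding an index $k$ that balances $\sum_{i\le k} x_i$ and $\sum_{i>k} x_i$ and then multiplies the two tail probabilities, whereas you instead lower-bound $\Pr[Y=0]$ directly via $\prod_j(1-y_j)\ge e^{-2\sum_j y_j}$. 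Your route here is a bit slicker: it avoids the partitioning step entirely and uses only that $\sum_j y_j\approx 1$ together with all $y_j<1/2$. The paper's route has the minor aesthetic advantage of using only $\Pr[Y\ge 2]$ throughout, but both arguments are elementary and achieve the same goal of sidestepping the Khintchine and Paley--Zygmund machinery needed for general~$b$.
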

\begin{proof}
    Observe that 
    \[\expect{|Y - 1|} = \sum_{0\le i \le \Delta} |i - 1 |\Pr[Y = i] \ge \sum_{2 \le i \le \Delta} (i-1) \Pr[Y = i] \ge \sum_{2 \le i \le \Delta} \Pr[Y = i] = \Pr[Y \ge 2].\]
    Hence, in order to prove the claim, we show that $\Pr[Y \ge 2] \ge M/1000$.
    First observe that, for any subset $S \in [\Delta]$, it holds that 
    \[
    \Pr[ \sum_{i\in S} Y_i  \ge 1] = 1 - \prod_{i\in S} (1 - y_{i}) \ge 
    1 - \exp\left[-\sum_{i\in S} y_{i} \right] \\
        \ge \frac{1}{2}\sum_{i\in S} y_{i},
    \]
    where the first equality holds by the independence assumption, and the last inequality holds by the fact that $1 - \exp(-x) \ge x/2$ for $x \in [0,3/2]$.

    W.l.o.g., assume $x_1 \ge x_2 \ge \ldots \ge x_\Delta$.
    We consider two cases, either $x_1 > 1/10$ or $x_1 \le 1/10$.
    In the former case, 
    \[
    \Pr[Y_1 =1] > \frac{1}{10} - \frac{1}{1000},  \text{ and } 
    \]
    \[\Pr[\sum_{2\le i \le \Delta} Y_i \ge 1] \ge \frac{1}{2}\sum_{2\le i \le \Delta} y_i \ge  \frac{1}{2}\left(\sum_{2\le i \le \Delta} x_i - M /1000\right) = \frac{M}{2} - \frac{M}{2000}.
    \]
    Thus, by the independence assumption, we get that 
    \[
    \Pr[Y \ge 2] \ge \left(\frac{1}{10} - \frac{1}{1000}\right) \cdot \left(\frac{M}{2} - \frac{M}{2000}\right) \ge M/1000, \text{ as required.}
    \]

    In the latter case, observe that, for all $i \in [\Delta]$, $x_i \le 1/10$. Hence, there exists an index $k$ such that $x_1 + \ldots + x_k \ge 1/10$ and $x_{k+1} + \ldots + x_\Delta \ge 1/10$.
    By assumption on the deviation between the $x_i$ values and the $y_i$ values, we get that $y_1 + \ldots + y_k \ge 1/10 - 1/1000$ and $y_{k+1} + \ldots + y_\Delta \ge 1/10 - 1/1000$.
    By the independence assumption, we thus get that\[
        \Pr[Y \ge 2] \ge \Pr[\sum_{1\le i \le k} Y_i \ge 1] \cdot \Pr[\sum_{k+1\le i \le \Delta} Y_i \ge 1] \ge \left(\frac{1}{2}\left(\frac{1}{10} - \frac{1}{1000}\right)\right)^2 \ge \frac{1}{1000} \ge \frac{M}{1000},
    \]
    as required.
\end{proof}

\end{document}